\documentclass[11pt,a4paper]{article}
\usepackage{comment,listings,framed,graphicx,url,xcolor,enumerate, hyperref, amsmath, float, multicol, multirow}
\usepackage{fullpage} 
\usepackage{todonotes}
\definecolor{mild}{rgb}{1,0.98,0.8}
\usepackage[utf8]{inputenc}    
\usepackage[T1]{fontenc}       
\usepackage{lmodern}           
\usepackage[margin=.6in]{geometry}
\usepackage{graphicx}          
\usepackage{amsmath,amssymb}   
\usepackage{booktabs}          
\usepackage{caption}           
\usepackage{natbib}
\usepackage{subcaption}        
\usepackage{microtype}         
\usepackage{hyperref}          
\usepackage{natbib}            
\usepackage{xcolor}            
\usepackage{amsthm}
\usepackage{amsmath}

\numberwithin{equation}{section}
\numberwithin{figure}{section}
\numberwithin{table}{section}

\newtheorem{theorem}{Theorem}[section]   

\theoremstyle{definition}

\theoremstyle{remark}
\newtheorem*{remark}{Remark}             

\hypersetup{
  colorlinks=true,
  linkcolor=blue,
  citecolor=black,
  urlcolor=blue
}

\title{Analysis of Nonlinear Random Polarization in Dispersive Dielectrics}
\title{Analysis of Nonlinear Random Polarization in Dispersive Dielectrics}

\author{%
Nathan L.~Gibson\thanks{CONTACT Nathan L.~Gibson. Email: gibsonn@math.oregonstate.edu}
\quad 
Emmanuel E.~Oguadimma\\
Oregon State University, Corvallis, OR, USA
}

\date{\today}

\begin{document}

\maketitle

\begin{abstract}
We present a study on the time-domain propagation of electromagnetic waves in dielectric materials modeled by a nonlinear Debye medium with random perturbations. Polynomial Chaos Expansions are employed to transform the random nonlinear Debye polarization model into a deterministic framework. We extend the Yee discretization to the resulting coupled system, establish second order accuracy, and verify convergence numerically. We investigate the sensitivity of nonlinear properties to uncertainty, particularly when the amplitude of the input signal is large. Given the challenges in manufacturing where uncertainties can cause optimal parameters to vary and potentially disrupt nonlinear effects, our approach incorporates these uncertainties within the simulation. This can enable the model-based design identification of realizable materials that maintain their desired effects despite variations. The findings from this study contribute to a deeper understanding of wave propagation in complex media, with potential implications for applications in optical communications, material science, and electromagnetic wave control.
\end{abstract} 

\noindent\textbf{Keywords:} Maxwell's Equations; FDTD; Polynomial Chaos; Inverse Problems; Nonlinear Debye

\section{Introduction}
Ultrashort electromagnetic pulses in the terahertz (THz) frequency range have gained significant interest for their ability to probe materials with high spatial and temporal resolution. A single-cycle pulse lasting just a few picoseconds enables full-spectrum THz measurements, facilitating rapid and non-invasive characterization of material properties. In this frequency regime, several organic materials exhibit strong dispersion and absorption that reflect their underlying molecular structure \cite{HTBanksGabriella}. These features can be exploited in applications ranging from security screening and industrial inspection to medical diagnostics, including the differentiation of cancerous and benign tissue based on electromagnetic signatures.

The propagation of ultrashort electromagnetic pulses in dielectric media is governed by Maxwell’s equations coupled to constitutive relations that describe the response of the material through polarization \cite{jackson1998classical,landau1984electrodynamics}. In dispersive dielectrics, polarization exhibits temporal memory and depends on the past history of the electric field; equivalently, the medium is characterized by a frequency-dependent permittivity, which leads to pulse distortion and dispersion \cite{landau1984electrodynamics,taflove2005computational,oughstun1994pulse}. Debye-type materials provide a classical model for such relaxation behavior through a single characteristic time scale, which can strongly influence transient phenomena such as precursors and early-time waveform features \cite{pleshko1969precursors,jackson1998classical}. Under sufficiently high field intensities, additional nonlinear polarization mechanisms become relevant, motivating nonlinearly forced Debye models in which the polarization depends nonlinearly on the electric field amplitude \cite{albanese1993ultrashort,boyd2008nonlinear}. Although these nonlinear models provide important insight, they are typically formulated in deterministic settings, using fixed material parameters. In practice, however, many materials exhibit heterogeneities at the microscale that lead to parametric uncertainty in macroscopic models. For example, spatial variability in the relaxation time $\tau$ can strongly affect the dielectric response of a material~\cite{gibson2014polynomial}. These considerations motivate an uncertainty-quantification viewpoint in which uncertain constitutive parameters are represented as random variables or random fields, and their impact on quantities of interest is propagated through the governing equations using efficient stochastic methods. This type of uncertainty is not unique to electromagnetics, it also arises in other physical systems such as flow through porous media, where spatially dependent soil properties introduce variability into governing parameters~\cite{ghanem1998stochastic}.

A major challenge in modeling pulse propagation in realistic dielectric media is that material parameters may vary across samples, spatial locations, or operating conditions, so any single fixed parameter choice can fail to represent the inherent variability. A common remedy is Monte Carlo sampling, in which one repeatedly solves the governing equations for randomized parameter realizations and estimates statistics from the resulting ensemble; however, because each sample requires a full forward solve and convergence in sampling-based estimators is typically slow \cite{smith2024uncertainty, xiu2010numerical}, this approach can become prohibitively expensive for time-dependent Maxwell systems. An alternative is to treat uncertain parameters as random variables on an underlying probability space and propagate that uncertainty through the equations directly. Polynomial Chaos Expansion (PCE) provides a spectral framework for doing so by expressing the stochastic solution as a series of orthogonal polynomials in the random inputs, an idea originating in Wiener’s chaos expansion for Gaussian variables and later generalized to broader distributions through the generalized polynomial chaos (gPC) framework developed by Xiu and Karniadakis \cite{wiener1938homogeneous,xiu2002wiener}.

In this work, we incorporate uncertainty into the Debye relaxation time and study how it interacts with the cubic nonlinear polarization response in dispersive dielectric media, extending the work done in \cite{HTBanksGabriella, zaidan2010mathematical, long2011nonlinear} . We use a polynomial chaos expansion to convert the resulting random polarization model into a deterministic surrogate system that enables efficient forward simulation. To solve this coupled deterministic system, we extend the classical second-order Yee leapfrog scheme, establish $\mathcal{O}(\Delta z^2 + \Delta t^2)$ accuracy, and verify the second-order convergence rates numerically. A frequency-domain analysis and the forward simulations show that the predicted dispersion, attenuation, and time-domain waveform can change noticeably depending on whether relaxation-time variability is included and whether the nonlinear forcing is retained, particularly across the GHz and THz regimes and for larger field amplitudes. This motivates the constrained inverse design problem considered here, in which we tune the material parameter set to increase the peak electric field at a prescribed observation location while controlling nonphysical oscillations. The optimization is solved using MATLAB \texttt{fmincon} with the interior-point algorithm. We included a significance test within the optimization study to quantify the contribution of the randomness parameter $\tau_r$ and the nonlinearity parameter $\beta$, individually and in combination, to the resulting optimization outcomes.

The remainder of this paper is organized as follows. In Section~\ref{sec:model}, we present the governing equations and the stochastic formulation of the nonlinear Debye model.
In Section~\ref{LF}, we describe the numerical discretization and implementation details. Section~\ref{sec:pce} introduces the intrusive PCE framework, derives the coupled deterministic system for the PCE modes, and presents a second-order accuracy analysis of the Yee leapfrog discretization. Section~\ref{sec:results} verifies the convergence rates and reports the forward uncertainty-propagation experiments. Section~\ref{sec:parameterIdentif} presents the parameter identification study. Finally, Section~\ref{sec:conclusions} concludes with a summary and directions for future work.

\section{Maxwell's Equations}\label{sec:model} 
We present Maxwell’s equations governing the electric field $\mathbf{E}$ and the magnetic field $\mathbf{H}$ in a spatial domain $\Omega \subseteq \mathbb{R}^3$ over a finite time interval $(0,T)$, where $T>0$. We assume the absence of free charges, i.e., $\rho=0$ in $\Omega$, and impose perfectly electric conducting boundary conditions on $\partial\Omega$, so that the tangential component of $\mathbf{E}$ vanishes. With homogeneous initial data, the electromagnetic fields satisfy the following system on $(0,T)\times\Omega$:
\begin{subequations}\label{eq: 2.1}
\begin{align}
\frac{\partial \mathbf{B}}{\partial t} + \nabla\times \mathbf{E} &= 0, \quad \text{in }(0,T)\times \Omega, \label{eq:2.1a}\\
\frac{\partial \mathbf{D}}{\partial t} + \mathbf{J} - \nabla\times \mathbf{H} &= 0, \quad \text{in }(0,T)\times \Omega, \label{eq:2.1b}\\
\nabla\cdot \mathbf{B} &= 0, \quad \text{in }(0,T)\times \Omega, \label{eq:2.1c}\\
\nabla\cdot \mathbf{D} &= \rho,\quad \text{in }(0,T)\times \Omega, \label{eq:2.1d}\\
\mathbf{E} \times \mathbf{n} &= 0, \quad \text{on }(0,T)\times \partial\Omega, \label{eq:2.1e}\\
\mathbf{E}(0,\mathbf{x}) &=0,\quad \mathbf{H}(0,\mathbf{x})=0 \quad \text{in } \Omega. \label{eq:2.1f}
\end{align}
\end{subequations}
 The Constitutive Laws describe the response of the medium to the electromagnetic field, including effects from polarization. They typically take the form:
\begin{subequations}\label{eq:2.2}
\begin{align}
\mathbf{D} &= \varepsilon\mathbf{E} + \mathbf{P}, \label{eq:2.2a}\\
\mathbf{B} &= \mu\,\mathbf{H} + \mathbf{M}, \label{eq:2.2b}\\
\mathbf{J} &= \sigma\,\mathbf{E} + \mathbf{J}_s, \label{eq:2.2c}
\end{align}
\end{subequations}
where \(\varepsilon=\varepsilon_0\varepsilon_\infty\).  In \eqref{eq:2.2}, \(\mathbf{D}\) and \(\mathbf{E}\) represent the electric flux density and the electric field, respectively, \(\mathbf{P}\) refers to the macroscopic polarization, \(\mathbf{M}\) the magnetization, and \(\mathbf{J}_s\) the source current density.  The dielectric parameters are \(\varepsilon_0\), the electric permittivity of free space; \(\varepsilon_\infty\), the relative electric permittivity in the limit of infinite frequencies; \(\mu\), the magnetic permeability; and \(\sigma\), the electric conductivity.

Combining equations \eqref{eq:2.1a} and \eqref{eq:2.2b} under the assumptions that there is no magnetization and that \(\mu=\mu_0\) (i.e., free‐space permeability) gives
\begin{equation}\label{eq:2.3}
\mu_0\,\frac{\partial \mathbf{H}}{\partial t} \;=\; -\,\nabla\times \mathbf{E}.
\end{equation}
Furthermore, we combine equations \eqref{eq:2.1b}, \eqref{eq:2.2a} and \eqref{eq:2.2c} to obtain 
\begin{equation}\label{eq:2.4}
\varepsilon_0\,\varepsilon_\infty\frac{\partial \mathbf{E}}{\partial t}
=-\sigma\,\mathbf{E} - \mathbf{J}_s -\frac{\partial \mathbf{P}}{\partial t}+\nabla\times \mathbf{H}.
\end{equation}

\subsection{Nonlinear Polarization Model}\label{sec 2.2}
For the purposes of this paper we are concerned with the way in which electromagnetic waves propagate through a nonlinear Debye media. These media have a polarization which can be expressed in convolution form as:
\begin{equation}\label{eq:2.6}
\mathbf{P}(t,\mathbf{x})
= \tilde{g} * (\mathbf{E}(t,\mathbf{x}) + h(\mathbf{E}(t,\mathbf{x}))
= \int_{0}^{t} \tilde{g}\bigl(t - s,\,\mathbf{x};\,\mathbf{q}\bigr)\,\bigl(\mathbf{E}(s,\mathbf{x})+ h(\mathbf{E}(s,\mathbf{x}))\bigr)\,\mathrm{d}s,
\end{equation}
where \(\tilde{g}\) is the dielectric response function (DRF), representing the memory effect caused by the dielectric. In every practical example (Debye, Lorentz, etc.) DRFs depend on parameters as well as on time (and possibly space); we write \(\tilde{g} = \tilde{g}(t,x;\mathbf{q})\), where typically \(\mathbf{q}\) contains the high-frequency permittivity \(\varepsilon_\infty\), the static permittivity \(\varepsilon_s\), and the relaxation time \(\tau\), i.e., \(\mathbf{q} = \{\varepsilon_{s}, \varepsilon_{\infty}, \tau\}\). For Debye materials this DRF is \cite{gibson2014polynomial}:
\begin{equation}\label{eq:2.7}
\tilde{g}(t,x; \mathbf{q})
= \frac{\varepsilon_{0}(\varepsilon_{s} - \varepsilon_{\infty})}{\tau}\,e^{-t/\tau},
\end{equation}
Substituting \eqref{eq:2.7} into \eqref{eq:2.6} and differentiating yields 
\begin{equation} \label{eq: 2.8}
     \frac{\partial \mathbf{P}}{\partial t} = \frac{\varepsilon_0 (\varepsilon_{s} - \varepsilon_{\infty})}{\tau} \,  \frac{\partial}{\partial t}\, \left(\int_0^t e^{-(t-s)/\tau} \bigl(\mathbf{E}(s,\mathbf{x})+h(\mathbf{E}(s,\mathbf{x})\bigl) \right)\, ds
\end{equation}
The preceding equation further simplifies to 
\begin{align*}
    \frac{\partial \mathbf{P}}{\partial t} &= \frac{\varepsilon_0 (\varepsilon_{s} - \varepsilon_{\infty})}{\tau} \left( \bigl(\mathbf{E}(t,\mathbf{x})+h(\mathbf{E}(t,\mathbf{x}))\bigr) -\frac{1}{\tau} \int_0^t e^{-(t-s)/\tau} \bigl(\mathbf{E}(s,\mathbf{x})+h(\mathbf{E}(s,\mathbf{x}))\bigr) ds\right) \\
    &= \frac{1}{\tau} \left(\varepsilon_0 (\varepsilon_{s} - \varepsilon_{\infty}) \bigl(\mathbf{E}(t,\mathbf{x})+h(\mathbf{E}(t,\mathbf{x}))\bigr) - \mathbf{P}\right)
\end{align*}
or equivalently, 
\begin{equation}\label{eq:2.9}
     \tau \frac{\partial \mathbf{P}}{\partial t} + \mathbf{P} = \varepsilon_0 (\varepsilon_{s} - \varepsilon_{\infty}) \bigl(\mathbf{E}(t,\mathbf{x})+h(\mathbf{E}(t,\mathbf{x}))\bigr)
\end{equation}
which is a nonlinearly forced Debye polarization model. By substituting \eqref{eq:2.9} into \eqref{eq:2.4}, and combining the the resulting equations with \eqref{eq:2.3}, we obtain a \textit{nonlinearly forced Maxwell--Debye system} given by the form
\begin{subequations}\label{MD}
\begin{align}
\mu_0\,\frac{\partial \mathbf{H}}{\partial t}
&= -\,\nabla\times \mathbf{E}, \\[4pt]
\varepsilon_0\,\varepsilon_\infty\frac{\partial \mathbf{E}}{\partial t}
&=-\sigma\,\mathbf{E} - \mathbf{J}_s -\frac{\partial \mathbf{P}}{\partial t}+\nabla\times \mathbf{H}, \\[4pt]
\tau\,\frac{\partial \mathbf{P}}{\partial t} + \mathbf{P}
&= \varepsilon_0 \varepsilon_d\left(\mathbf{E}+ \beta\,|\mathbf{E}|^2\,\mathbf{E}\right),
\end{align}
\end{subequations}
This model implicitly assumes that single parameter values are representatives of the dielectric response of materials. More realistically, these parameters should be modeled as random variables with probability distributions, which we explore in the following section. 

\subsection{Random Polarization} 
In earlier studies, numerical methods for approximating solutions to Maxwell’s equations often assumed that material parameters, such as relaxation time, could be adequately represented by their mean values. Subsequent approaches improved upon this by modeling these parameters using uniform distributions. However, such approximations remain insufficient, as empirical studies suggest that relaxation times are more accurately characterized by a log-normal distribution~\cite{boettcher1974polarization}, which can be better approximated using beta or gamma distributions. To account for uncertainty in polarization mechanisms, we introduce a distribution over the relaxation parameters and refer to such materials as \emph{polydispersive}. Thus, we define our polarization model in terms of a distribution‐dependent dielectric response function $g$,
\begin{equation}\label{eq:3.1}
\mathbf{P}(t,\mathbf{x};F)
=\int_{0}^{t}g\bigl(t-s,\mathbf{x};F\bigr)\,\bigl(\mathbf{E}(s,\mathbf{x})+h(\mathbf{E}(s,\mathbf{x}))\bigr)\,ds,
\end{equation}
where $g$ is determined by a family of polarization laws, each described by a different parameter $\tau$, and therefore is given by
\begin{equation}\label{eq:2.10}
g(t,\mathbf{x};F)
=\int_{\Gamma}\tilde g(t,\mathbf{x};\tau)\,dF(\tau),
\quad \Gamma =[\tau_a,\tau_b]\subset(0,\infty).
\end{equation}
where $F(\tau)$ is the density function for $\tau$. In particular, if the distribution $F$ were discrete, consisting of a single relaxation parameter, then we would again have \eqref{eq:2.9}.  In the case where $\tau$ has a uniform distribution, $dF(\tau)=f(\tau)\,d\tau$ with $f(\tau)=1/(\tau_b-\tau_a)$.

By plugging \eqref{eq:2.10} into \eqref{eq:3.1}, the macroscopic electric polarization reads 
\[
\mathbf{P}(t,\mathbf{x})
=\int_{0}^{t}\Bigl[\int_{\Gamma}\tilde{g}\bigl(t-s,\mathbf{x};\tau\bigr)\,dF(\tau)\Bigr]\,\bigl(\mathbf{E}(s,\mathbf{x})+h(\mathbf{E}(s,\mathbf{x}))\bigr)\,ds,
\]
or equivalently, by interchanging integrals, we obtain
\begin{equation}\label{eq:3.2}
\mathbf{P}(t,\mathbf{x})
=\int_{\Gamma}\mathcal{P}(t,\mathbf{x};\tau)\,dF(\tau)
=\mathbb{E}[\mathcal{P}],
\end{equation}
where
\[
\mathcal{P}(t,\mathbf{x};\tau)
=\int_{0}^{t}\tilde{g}\bigl(t-s,\mathbf{x};\tau\bigr)\,\bigl(\mathbf{E}(s,\mathbf{x})+h(\mathbf{E}(s,\mathbf{x}))\bigr)\,ds
\]
which can be expressed as the solution to a first‐order random ordinary differential equation (RODE) where the relaxation time $\tau$ is modeled as a random variable
\begin{equation}\label{eq:2.12}
\tau\frac{\partial\mathcal{P}}{\partial t} + \mathcal{P}
= \varepsilon_0\varepsilon_d\bigl(\mathbf{E}(t,\mathbf{x})+h(\mathbf{E}(t,\mathbf{x}))\bigr).
\end{equation}
where $\varepsilon_d = \varepsilon_s - \varepsilon_\infty$. Combining \eqref{eq:2.12}, \eqref{eq:2.3}, and \eqref{eq:2.4} yields the nonlinearly forced Maxwell–random Debye system, where \eqref{eq:2.12} depends on the expected value of the random polarization, $\mathcal{P}$. 

\subsubsection{Complex Permittivity}
In the frequency domain, the linear dielectric response of a material is conveniently summarized by its \emph{complex permittivity} $\epsilon(\omega) = \epsilon_R(\omega) - i \epsilon_I(\omega)$, which encodes both dispersion and electromagnetic energy dissipation (dielectric loss) in the medium \cite{jackson1998classical,landau1984electrodynamics}. When an oscillating electric field is applied, bound charges and dipoles do not generally respond instantaneously; instead, polarization lags the driving field due to microscopic relaxation processes that depend on the material’s structure and the excitation frequency. This phase lag manifests as a frequency-dependent real part $\epsilon_R(\omega)$ (associated with energy storage) together with a nonzero imaginary part $\epsilon_I(\omega)$ (associated with energy loss) \cite{landau1984electrodynamics}. Therefore, complex permittivity is essential for modeling dispersive and lossy dielectrics, such as biological tissues, water, ceramics, and polymers~\cite{gabriel1996dielectric}. In these cases, simple real-valued models of permittivity are insufficient to accurately predict field behavior, particularly at GHz and THz frequencies. 

We next derive an expression for the complex permittivity associated with the one spatial dimensional nonlinearly forced Debye polarization model. Throughout this derivation we assume $\beta>0$ and adopt the cubic nonlinear forcing:
\begin{align}\label{eqn 2.23}
    h(E)=\beta E^3.
\end{align}
To obtain a frequency-domain representation, we take the Fourier transform in time of the polarization equation \eqref{eq:2.2a}. Denoting the Fourier transform of a time-dependent quantity $f(t)$ by $\widehat{f}(\omega)$, we obtain 
\begin{equation}\label{3.2}
    \hat{D} = \varepsilon_0 \varepsilon_\infty \hat{E} + \hat{\tilde{g}} \hat{E} + \beta \hat{\tilde{g}} \, \widehat{E^3},
\end{equation}
A direct computation of this convolution yields 
\begin{align}
    \hat{D} &= \varepsilon_0 \varepsilon_\infty \hat{E} + \frac{\varepsilon_0 \varepsilon_d}{1+i\omega \tau} \hat{E} \notag\\
    &\qquad+ \frac{\beta \varepsilon_0 \varepsilon_d}{1+i\omega \tau} \frac{1}{4\pi^2} \int_{-\infty}^\infty  \int_{-\infty}^\infty \hat{E}(\omega'')\hat{E}(\omega'-\omega'')\hat{E}(\omega - \omega')\,d\omega''\,d\omega'
\end{align}
Due to the complexity and computational cost of evaluating this term explicitly, we introduce a simplifying approximation by linearizing the nonlinear function \eqref{eqn 2.23} giving
\[
    h(E) \approx \beta |E|^2 E,
\]
where $|E|^2$ is treated as a constant envelope (in this paper, $|E|^2 = \max_{t} |E|^2$). This allows us to factor the nonlinearity out of the convolution \eqref{3.2}, significantly simplifying the analysis. Instead of using \eqref{eq:2.2a}, we consider a more general frequency-domain representation of Maxwell’s constitutive law in the presence of conductivity $\sigma \neq 0$, starting from 
\begin{align}
    \mathcal{F}\left(\frac{\partial D}{\partial t} + J\right)   &= \mathcal{F}(\dot{D} + J) \label{constlawsigma}\\
    &= \mathcal{F}(\varepsilon_0 \varepsilon_\infty \dot{E} + \dot{P} + \sigma E)\notag\\
    &= \varepsilon_0 \varepsilon_\infty\mathcal{F}(\dot{E}) + \mathcal{F}(\dot{P}) + \sigma\mathcal{F}({E})\notag\\
    &\approx \varepsilon_0 \varepsilon_\infty \mathcal{F}(\dot{E}) + \mathcal{F}(\tilde{g} * \dot{E} + \tilde{g} * \beta |E|^2 \dot{E}) + \sigma \mathcal{F}(E)   \notag\\
    &= i \omega \varepsilon_0 \varepsilon_\infty \hat{E} + i\omega\hat{\tilde{g}}\hat{E} + i \omega \hat{\tilde{g}} \beta |E|^2 \hat{E} + \sigma \hat{E} \notag\\
    &= i\omega \left(\varepsilon_0 \varepsilon_\infty \hat{E} + \frac{\varepsilon_0 \varepsilon_d}{1+i\omega\tau} \hat{E} + \frac{\varepsilon_0 \varepsilon_d \beta}{1+i\omega \tau}|E|^2 \hat{E} + \frac{\sigma}{i\omega} \hat{E}\right)\notag\\
    &= i\omega\varepsilon_0 \left(\varepsilon_\infty + \frac{\varepsilon_d}{1+i\omega\tau} + \frac{\varepsilon_d \beta |E|^2}{1+i\omega\tau} + \frac{\sigma}{i\omega \varepsilon_0}\right) \hat{E}\notag\\
    &= i\omega\varepsilon_0 \left(\varepsilon_\infty + \frac{\varepsilon_d}{1+(\omega \tau)^2} + \frac{\beta \varepsilon_d |E|^2}{1+(\omega \tau)^2} - i \left(\frac{\varepsilon_d \omega \tau}{1+(\omega \tau)^2} + \frac{\beta \varepsilon_d |E|^2 \omega \tau}{1+(\omega \tau)^2} + \frac{\sigma}{\omega \varepsilon_0}\right)\right) \hat{E}\notag\\
    &= i\omega \varepsilon_0 \varepsilon(\omega) \hat{E} \notag
\end{align}
and so, we have the complex permittivity to be given by
\begin{equation}\label{eq: 2.26}
    \varepsilon(\omega) = \varepsilon_\infty + \frac{\varepsilon_d}{1+(\omega \tau)^2} + \frac{\beta \varepsilon_d |E|^2}{1+(\omega \tau)^2} - i \left(\frac{\varepsilon_d \omega \tau}{1+(\omega \tau)^2} + \frac{\beta \varepsilon_d |E|^2 \omega \tau}{1+(\omega \tau)^2} + \frac{\sigma}{\omega \varepsilon_0}\right)
\end{equation}
with the real (permittivity) and imaginary (conductivity) parts given by
\begin{align} \label{eq: 2.27}
    \varepsilon_R(\omega) &= \varepsilon_\infty + \frac{\varepsilon_d}{1+(\omega \tau)^2} + \frac{ \varepsilon_d \beta |E|^2}{1+(\omega \tau)^2}
\end{align}
and
\begin{align}\label{eq: 2.28}
    \varepsilon_I(\omega) = \sigma(\omega) = \omega \varepsilon_0\left(\frac{\varepsilon_d \omega \tau}{1+(\omega \tau)^2} + \frac{\varepsilon_d \beta |E|^2 \omega \tau}{1+(\omega \tau)^2} + \frac{\sigma}{\omega \varepsilon_0}\right)
\end{align}
Observe that $\text{$\sigma(\omega) \to \sigma$ as } \omega \to 0$.

\subsubsection{Expected Complex Permittivity}
In order to account for uncertainty in the relaxation behavior of the material, we model the relaxation time $\tau$ as a uniformly distributed random variable. In this case, \eqref{eq: 2.26} will be referred to as a \emph{stochastic complex permittivity} \cite{bela2010generalized}. In this case, equation \eqref{constlawsigma} depends on the expected value of $D$.  Our goal is to compute the \emph{expected complex permittivity} by averaging over all possible values of $\tau$, thereby obtaining a frequency-dependent response that captures both dispersion and uncertainty. Let us assume $\xi \sim \mathcal{U}[-1,1]$. Then $\tau \sim \mathcal{U}(\tau_{a}, \tau_b)$. The probability density function for $\tau$ is given by 
\begin{align}
    f(\tau) = \left\{
\begin{array}{ll}
1/(\tau_{b} - \tau_{a}) & \tau_{a} < \tau < \tau_{b}, \\
0 & \text{otherwise}.
\end{array}
\right.
\end{align}
For any complex number $z = x + i y \in \mathbb{C}$, we know that 
\[
    \ln(z) = \ln|z| + i \arg(z), \quad \arg(z) = \tan^{-1}(y/x). 
\]
Hence, the expected complex permittivity (for $\sigma \neq 0$) is given by
\begin{align*}
    \mathbb{E}[\varepsilon(\omega)] &= \mathbb{E}\left[\varepsilon_\infty + \frac{\varepsilon_d}{1+(\omega \tau)^2} + \frac{\beta \varepsilon_d |E|^2}{1+(\omega \tau)^2} - i \left(\frac{\varepsilon_d \omega \tau}{1+(\omega \tau)^2} + \frac{\beta \varepsilon_d |E|^2 \omega \tau}{1+(\omega \tau)^2} + \frac{\sigma}{\omega \varepsilon_0}\right)\right] \\
    &= \int_{\tau_a}^{\tau_b} \varepsilon_\infty f(\tau) \, d\tau + \int_{\tau_a}^{\tau_b} \frac{\varepsilon_d}{1+(\omega \tau)^2} f(\tau) \, d\tau + \int_{\tau_a}^{\tau_b} \frac{\beta \varepsilon_d |E|^2 }{1+(\omega \tau)^2} f(\tau) \, d\tau \\
    &\quad - i \left(\int_{\tau_a}^{\tau_b} \frac{\varepsilon_d \omega \tau}{1+(\omega \tau)^2} f(\tau) \, d\tau + \int_{\tau_a}^{\tau_b} \frac{\beta \varepsilon_d |E|^2 \omega \tau}{1+(\omega \tau)^2} f(\tau) \, d\tau + \int_{\tau_a}^{\tau_b} \frac{\sigma}{\omega \varepsilon_0} f(\tau) \, d\tau \right) \\
    &= \varepsilon_\infty + \frac{\varepsilon_d}{\omega(\tau_b - \tau_a)} \tan^{-1}(\omega \tau) \bigg|_{\tau_a}^{\tau_b} + \frac{\beta |E|^2 \varepsilon_d}{\omega(\tau_b - \tau_a)} \tan^{-1}(\omega \tau) \bigg|_{\tau_a}^{\tau_b} \\
    &\quad - i \left( \frac{\varepsilon_d}{2\omega(\tau_b - \tau_a)} \ln(1+(\omega \tau)^2) \bigg|_{\tau_a}^{\tau_b} + \frac{\beta \varepsilon_d |E|^2}{2\omega(\tau_b - \tau_a)} \ln(1+(\omega \tau)^2) \bigg|_{\tau_a}^{\tau_b} + \frac{\sigma}{\omega \varepsilon_0} \right)
\end{align*}
with the real (permittivity) and imaginary (conductivity) parts given by 
\begin{align}\label{eq: 2.30}
    \varepsilon_R(\omega) = \varepsilon_\infty + \frac{\varepsilon_d}{\omega(\tau_b - \tau_a)} \tan^{-1}(\omega \tau) \bigg|_{\tau_a}^{\tau_b} + \frac{\beta |E|^2 \varepsilon_d}{\omega(\tau_b - \tau_a)} \tan^{-1}(\omega \tau) \bigg|_{\tau_a}^{\tau_b}
\end{align}
\begin{align}\label{eq: 2.31}
     \varepsilon_I(\omega) = \omega \varepsilon_0 \left( \frac{\varepsilon_d}{2\omega(\tau_b - \tau_a)} \ln(1+(\omega \tau)^2) \bigg|_{\tau_a}^{\tau_b} + \frac{\beta \varepsilon_d |E|^2}{2\omega(\tau_b - \tau_a)} \ln(1+(\omega \tau)^2) \bigg|_{\tau_a}^{\tau_b} + \frac{\sigma}{\omega \varepsilon_0} \right)
\end{align}
respectively. For the upper and lower bounds of the $\tau$ distribution, since $\tau_a < \tau_b$, we obtain
\[
    \varepsilon_{R}^U(\omega) = \varepsilon_\infty + \frac{\varepsilon_d}{1+(\omega \tau_a)^2}+ \frac{ \varepsilon_d \beta |E|^2}{1+(\omega \tau_a)^2}, \quad \varepsilon_{R}^L(\omega) = \varepsilon_\infty + \frac{\varepsilon_d}{1+(\omega \tau_b)^2}+ \frac{ \varepsilon_d \beta |E|^2}{1+(\omega \tau_b)^2}
\]
\[
    \sigma^U(\omega) = \omega \varepsilon_0 \left( \frac{\varepsilon_d \omega \tau_a}{1+(\omega \tau_a)^2} + \frac{\varepsilon_d \beta |E|^2 \omega \tau_a}{1+(\omega \tau_a)^2} + \frac{\sigma}{ \omega \varepsilon_0 } \right),
\]
\[
    \sigma^L(\omega) = \omega \varepsilon_0 \left( \frac{\varepsilon_d \omega \tau_b}{1+(\omega \tau_b)^2} + \frac{\varepsilon_d \beta |E|^2 \omega \tau_b}{1+(\omega \tau_b)^2} + \frac{\sigma}{ \omega \varepsilon_0 } \right)
\]

\begin{figure}[H]
    \centering
    \begin{subfigure}{0.48\linewidth}
         \includegraphics[width=\linewidth]{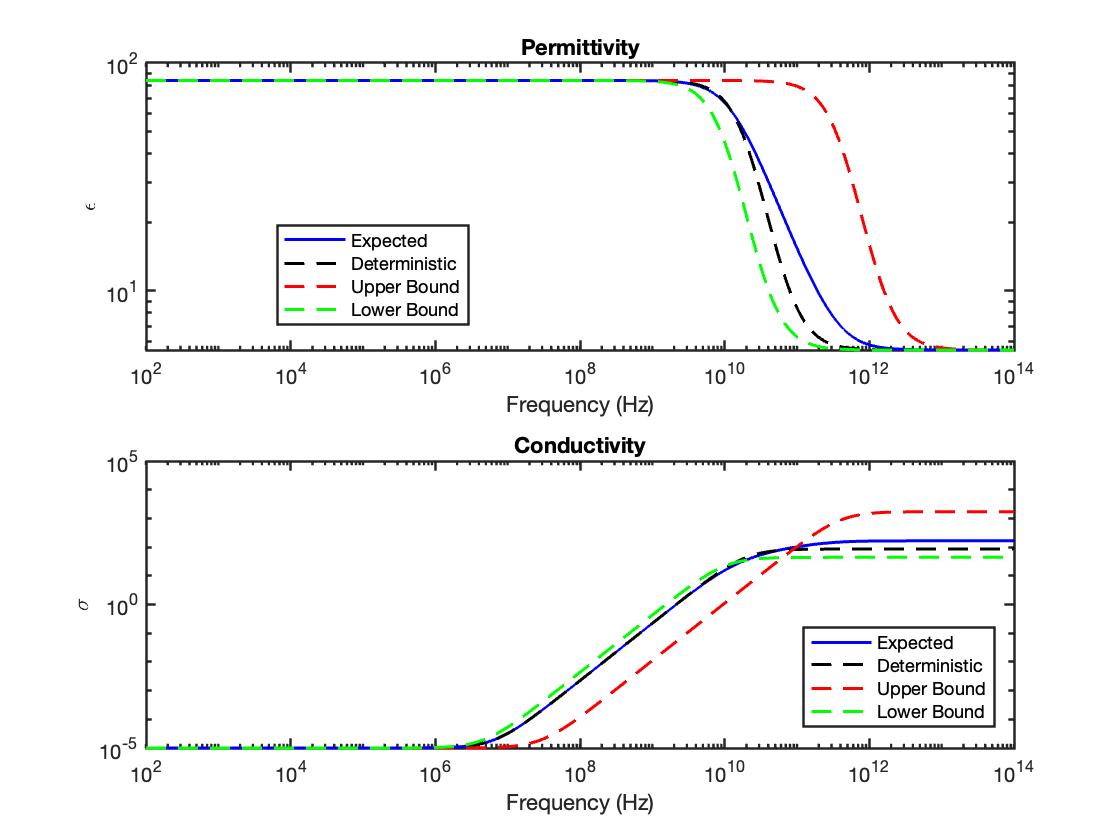}
         \caption{$|E|^2 = 10^6, \, \beta = 5 \times 10^{-6}, \, \tau_r = .95\tau_m, \, Q = 5$}
    \end{subfigure}\hfill 
    \begin{subfigure}{0.48\linewidth}
        \includegraphics[width=\linewidth]{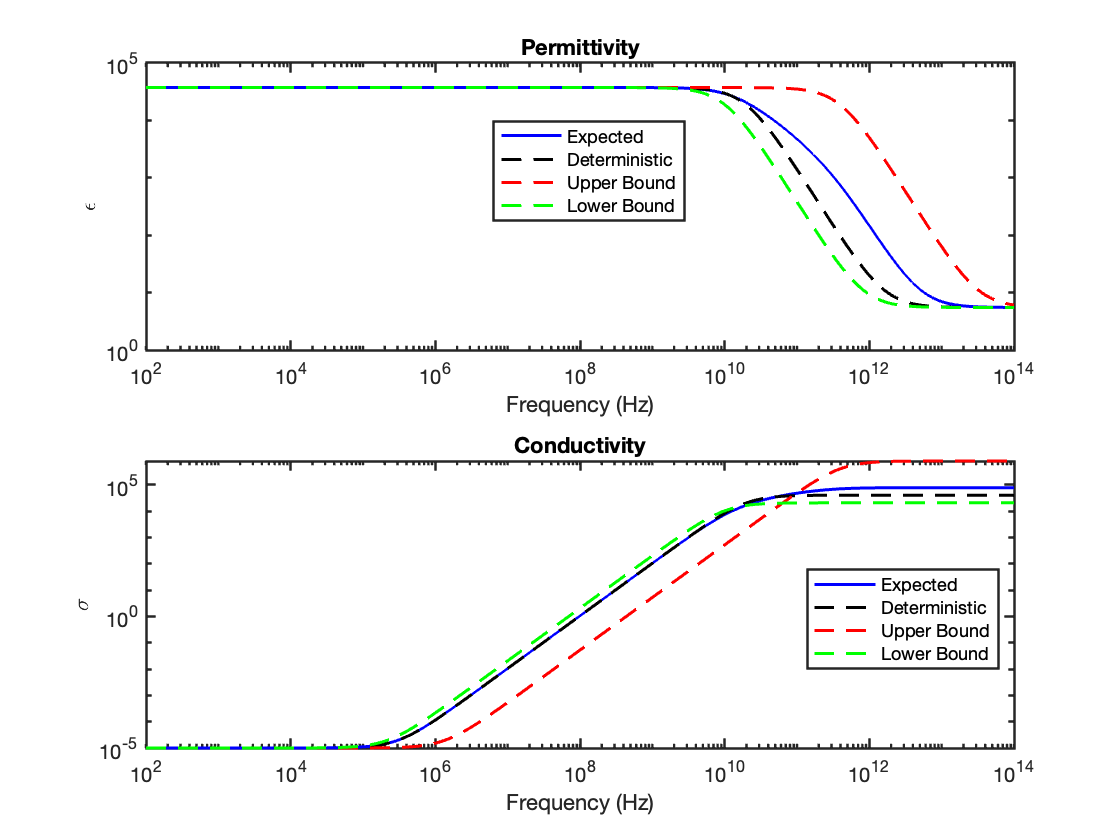}
         \caption{$|E|^2 = 10^8, \, \beta = 5 \times 10^{-6}, \, \tau_r = .95\tau_m, \, Q = 5$}
    \end{subfigure}
    \caption{Complex permittivity of the nonlinearly forced Debye model under relaxation-time uncertainty.}
    \label{fig:nonlin}
\end{figure}

\begin{figure}[H]
  \centering
  \begin{subfigure}{0.48\linewidth}
    \centering
    \includegraphics[width=\linewidth]{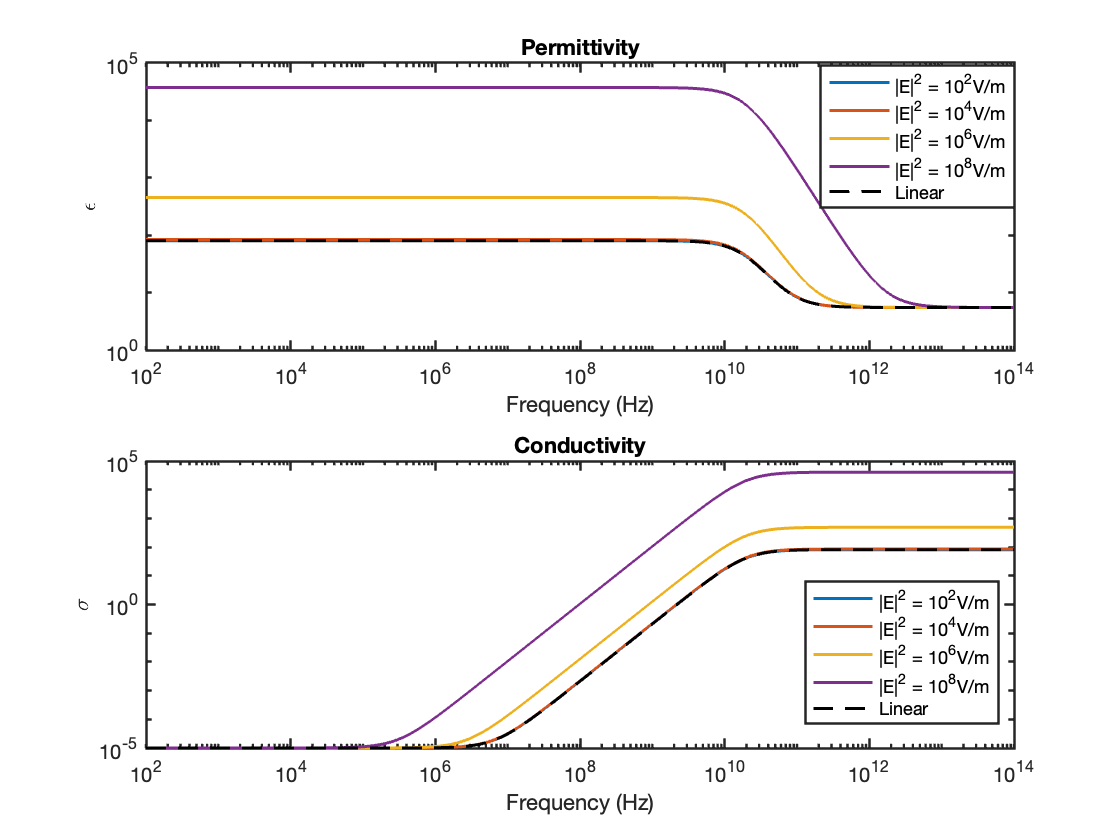}
    \caption{$\beta = 5 \times 10^{-6}, \; \tau_r = .5\tau_m, \, Q = 5$}
  \end{subfigure}\hfill
  \begin{subfigure}{0.48\linewidth}
    \centering
    \includegraphics[width=\linewidth]{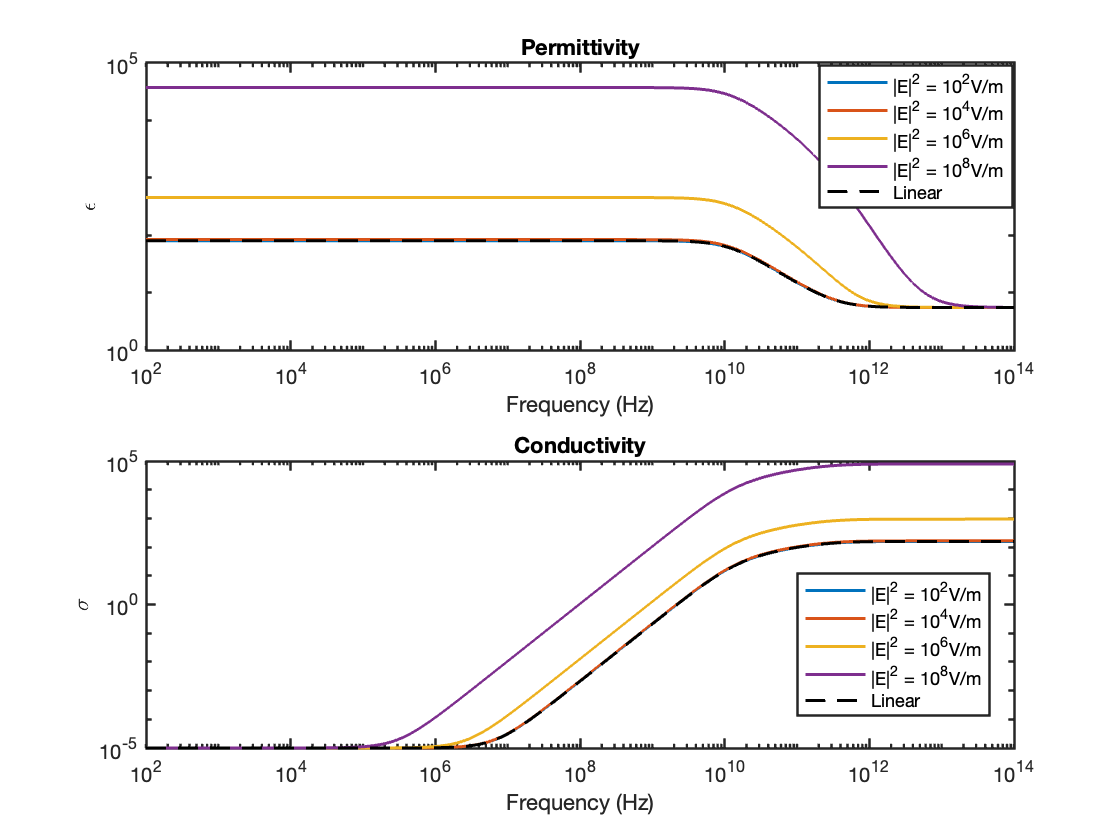}
    \caption{$\beta = 5 \times 10^{-6}, \; \tau_r = .95\tau_m, \, Q = 5$}
  \end{subfigure}
  \caption{Complex permittivity of the linear Debye model and the nonlinearly forced Debye model under relaxation-time uncertainty, shown for multiple field intensities.}
  \label{fig:nonlinran}
\end{figure}
As depicted in Fig.~\ref{fig:nonlin}, for higher frequencies ($f\geq 10 $ GHz), the gradual decrease in permittivity values for the linearized expected permittivity suggests that the electric field in the random model should propagate more slowly, while its higher conductivity indicates greater attenuation than the deterministic model. Fig.~\ref{fig:nonlinran} shows the propagation in the linear model appears to be the fastest (lowest permittivity) with the least attenuation (lowest conductivity), which we later see agrees with our time-domain simulations. However in our simulations, the electric field $E$ need not be as high as the linearized complex permittivity analysis suggests in order to observe differences.

\subsubsection{Cole-Cole Plot}
A Cole--Cole plot is a complex-plane representation of dielectric dispersion in which the imaginary part \eqref{eq: 2.28} of the complex permittivity is plotted parametrically against the real part \eqref{eq: 2.27}, with the angular frequency $\omega$ serving as the parameter \cite{cole1942dispersion,cole1941dispersion,kremer2002bds}. This visualization is widely used because it compresses broadband frequency-response information into a single geometric curve and makes deviations from ideal Debye relaxation immediately apparent \cite{woodward2021bds,barsoukov2005impedance}. For a homogeneous dielectric governed by a single Debye relaxation time, the Cole--Cole locus forms a perfect semicircular arc; by contrast, heterogeneous media in which polarization relaxes through a distribution of time scales typically produce a depressed (non-semicircular) arc whose shape is often captured phenomenologically by the Cole--Cole model \cite{cole1941dispersion,barsoukov2005impedance,kremer2002bds, gibson2014polynomial}. These observations motivate modeling relaxation behavior beyond a single deterministic time scale.

\begin{figure}[H]
  \centering
  \begin{subfigure}{0.48\linewidth}
    \centering
     \includegraphics[width=\linewidth]{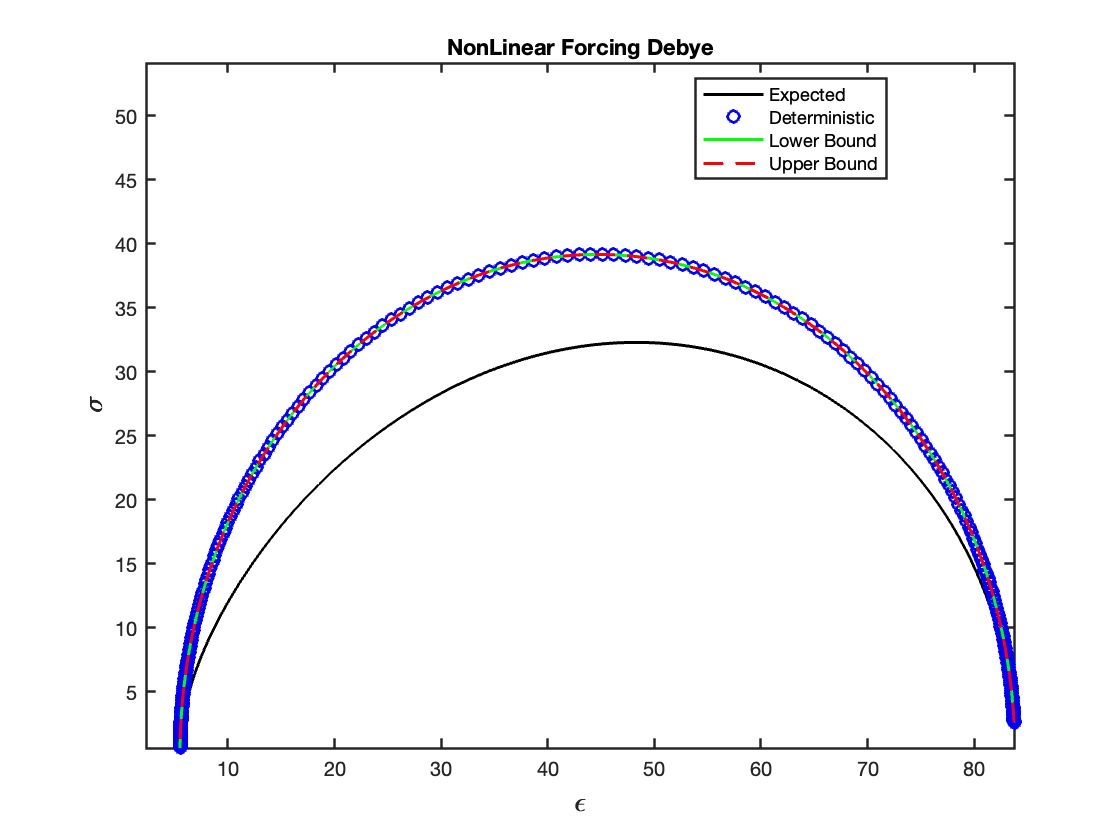}
    \caption{$|E|^2 = 10^4, \; \tau_r = .95\tau_m, \, Q = 5$ }
  \end{subfigure}\hfill
  \begin{subfigure}{0.48\linewidth}
    \centering
    \includegraphics[width=\linewidth]{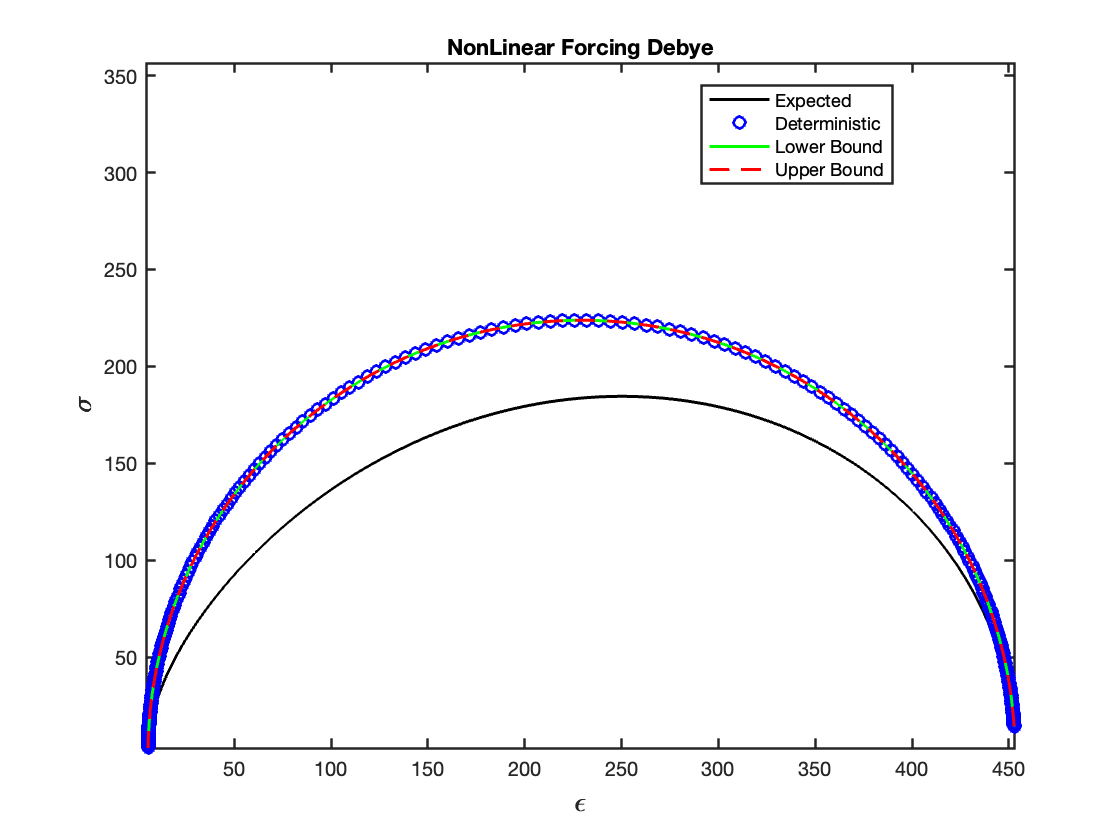}
    \caption{$|E|^2 = 10^6, \; \tau_r = .95\tau_m, \, Q = 5$  }
  \end{subfigure}


  \caption{Cole--Cole plots of the dielectric response for the nonlinearly forced Debye model under relaxation-time uncertainty.}
  \label{fig:figcol}
\end{figure}
Fig.~\ref{fig:figcol} shows that the conductivity and permittivity values increase in both the deterministic and random models with increasing field intensity, $|E|^2$. This suggests that the electromagnectic wave from the linear model has less attenuation and more speed of propagation than the nonlinear model. Since each of the Debye curves has the same shape, only shifted, these Cole-Cole plots coincide with each other. The random model, however, does not have the same shape as the deterministic Debye model and more closely resembles real material responses.

\section{Fully Discrete Leapfrog FDTD Method}\label{LF}
In this section, we construct and analyze spatial and temporal discretizations for the one dimensional form of \eqref{MD}.  Consider the spatial domain $\Omega = [0,a]\subset\mathbb{R}$ and time interval $[0,T]$ with $a,T>0$, spatial step‐size $\Delta z>0$, and time‐step $\Delta t>0$.  The discretization of $[0,a]$ and $[0,T]$ is performed as follows.  Let $L =a/\Delta z, N = T/\Delta t$. For $\ell,n\in\mathbb{N}$ we consider
\begin{subequations}
\begin{align}
0 &= z_0 \le z_1 \le \cdots \le z_\ell \le \cdots \le z_L = a,\\
0 &= t_0 \le t_1 \le \cdots \le t_n \le \cdots \le t_N = T,
\end{align}
\end{subequations}
where $z_\ell = \ell\,\Delta z, t_n = n\,\Delta t.$ Let $ (z_\alpha,t_\gamma) = \bigl(\alpha\Delta z,\;\gamma\Delta t\bigr),
    \alpha\in\{\ell,\ell+\tfrac12\},\;\gamma\in\{n,n+\tfrac12\}$.
The Yee scheme staggers the fields in space and time.  The electric field $E$ and $P$ are discretized at integer‐points in space and integer times, while the magnetic field $H$ is discretized at half-points in space and half‐integer times. We set
\begin{subequations}
\begin{align}
\Omega_h^E &:= \bigl\{z_{\ell}\mid 0\le\ell\le L\bigr\}, \\
\Omega_h^H &:= \bigl\{z_{\ell+\tfrac12}\mid 0\le\ell\le L-1\bigr\}
\end{align}
\end{subequations}
and for time levels $E,P\text{ at }t^n,\text{ for }0\le n\le N,
\, \text{and } H\text{ at }t^{n+\tfrac12},\text{ for }\;0\le n\le N-1.$ Let $U$ denote either $E$ or $P$ on $\Omega_h^E$ at $t^n$, or $H$ on $\Omega_h^H$ at $t^{n+\frac12}$.  We define the grid functions $ U^\gamma_{\alpha} \approx U\bigl(t_\gamma,\,z_\alpha\bigr),$ and write $U(t_\gamma)$ for the continuous solution, $U^\gamma$ for its grid approximation. 

Next, define the discrete $l^2$ spaces:
\begin{subequations}
\begin{align}
    V_E^h &:= \Bigl\{F:\Omega_h^E\to\mathbb{R}\;\Big|\;\|F\|_E<\infty\Bigr\},\\
    V_H^h &:= \Bigl\{U:\Omega_h^H\to\mathbb{R}\;\Big|\;\|U\|_H<\infty\Bigr\},
\end{align}
\end{subequations}
with norms
\begin{subequations}
\begin{align}
\|F\|_E^2
  &= \Delta z
    \sum_{\ell=0}^{L-1}
    \bigl|F_{\ell}\bigr|^2, \quad 
\|U\|_H^2
  = \Delta z
    \sum_{\ell=0}^{L}
    \bigl|U_{\ell+\tfrac12}\bigr|^2,
\end{align}
\end{subequations}
Our fully discrete, second-order leapfrog FDTD schemes for discretizing the nonlinearly forced Maxwell problem \eqref{MD} are given by the following.

Assume $ E_{h}^0, P_{h}^0 \in V_{E}^h$ and $H_{h}^{\frac{1}{2}} \in V_{H}^h$ are given. Then $\forall \, n \in \mathbb{N}$, with $0 < n \leq N$, find $ E_{h}^n, P_{h}^n, \in V_{E}^h$ and $H_{h}^{n+\frac{1}{2}} \in V_{H}^h$ so that $\forall \, 0 \leq j \leq J$ we have
\begin{subequations}\label{3.9}
    \begin{align}
     \mu_0 \frac{H^{n+\frac{1}{2}}_{j+\frac{1}{2}}-H^{n-\frac{1}{2}}_{j+\frac{1}{2}}}{\Delta t} &= -\frac{E^n_{j+1}-E^n_j}{\Delta z}\label{eq: 3.11a}\\
    \varepsilon_0 \varepsilon_\infty \frac{E^{n+1}_j - E^n_j}{\Delta t} = -\frac{H^{n+\frac{1}{2}}_{j+\frac{1}{2}}-H^{n+\frac{1}{2}}_{j-\frac{1}{2}}}{\Delta z}
    &- \frac{P^{n+1}_j - P^n_j}{\Delta t}
    - \sigma\,\frac{E^{n+1}_j+E^n_j}{2}-J^{n+\frac{1}{2}}_s \label{eq: 3.11b}\\
     \tau \frac{P^{n+1}_j-P^n_j}{\Delta t} + \frac{P^{n+1}_j+P^n_j}{2} &=  \varepsilon_0 \varepsilon_d\left(\frac{E^{n+1}_j + E^n_j}{2}
     + \beta \left(\frac{E^{n+1}_j+E^n_j}{2}\right)^3\right). \label{eq: 3.11c}
\end{align}
\end{subequations}
The following theorem shows that the leap-frog discretization in \eqref{3.9} is second-order accurate in both space and time. We omit the proof as it is a special case of the convergence proof for the PCE discrete system that appear in later section. 
\begin{theorem}
Suppose that the solutions to the one-dimensional Maxwell-Nonlinear Debye system \eqref{MD} are smooth enough and satisfy the regularity conditions $E, P, H \in C^3([0, T]; [C^3(\overline{\Omega})])$. Let $\xi^{n}_{H,\,j+\frac12}, \xi^{n+\frac12}_{P,\,j}, \xi^{n+\frac12}_{E,\,j}$ be the truncation errors for the Yee discretizations in \eqref{3.9}. Then for any $j\text{ and }n$, we have 
    \begin{equation}
        \max_{n,j}\left\{\bigl|\xi^{n}_{H,\,j+\frac12}\bigr|,\ \bigl|\xi^{n+\frac{1}{2}}_{P,\,j}\bigr|,\ \bigl|\xi^{n+\frac{1}{2}}_{E,\,j}\bigr|\right\}
        = \mathcal{O}(\Delta z^2+\Delta t^2).
    \end{equation} 
\end{theorem}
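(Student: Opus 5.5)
We substitute the exact solution $(E,H,P)$ of \eqref{MD} into the scheme \eqref{3.9} and define the truncation errors as the residuals. Each residual is Taylor-expanded about the natural center of its stencil, and the consistency of \eqref{MD} is used to cancel the leading terms. Concretely,
\begin{align*}
\xi^{n}_{H,j+\frac12} &= \mu_0 \frac{H(t_{n+\frac12},z_{j+\frac12})-H(t_{n-\frac12},z_{j+\frac12})}{\Delta t} + \frac{E(t_n,z_{j+1})-E(t_n,z_j)}{\Delta z},
\end{align*}
which is expanded about $(t_n,z_{j+\frac12})$. The residuals $\xi^{n+\frac12}_{E,j}$ and $\xi^{n+\frac12}_{P,j}$ are defined analogously from \eqref{eq: 3.11b} and \eqref{eq: 3.11c}, each expanded about $(t_{n+\frac12},z_j)$.

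The key ingredients are three elementary estimates, each valid for $u\in C^3$:
\begin{itemize}
\item the centered difference $\frac{u(t+\frac{\Delta t}{2})-u(t-\frac{\Delta t}{2})}{\Delta t}=u_t(t)+\mathcal{O}(\Delta t^2)$, with constant $\frac{1}{24}\|u_{ttt}\|_\infty$;
\item the same estimate in $z$, with constant $\frac{1}{24}\|u_{zzz}\|_\infty$;
\item the averaging estimate $\frac{u(t+\frac{\Delta t}{2})+u(t-\frac{\Delta t}{2})}{2}=u(t)+\mathcal{O}(\Delta t^2)$, which needs only $u_{tt}$ bounded.
\end{itemize}

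\textbf{$H$ equation.} Applying the first two estimates, the residual becomes $\mu_0 H_t + E_z$ evaluated at $(t_n,z_{j+\frac12})$, plus $\mathcal{O}(\Delta t^2+\Delta z^2)$. The first term vanishes by the first equation of \eqref{MD}.

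\textbf{$E$ equation.} We treat the $H$ difference in space and the $E$ and $P$ differences in time with the centered estimates, use the averaging estimate for $\sigma(E^{n+1}+E^n)/2$, and evaluate $J_s$ exactly at $t_{n+\frac12}$. The leading part then vanishes by the second equation of \eqref{MD}.

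\textbf{$P$ equation.} The linear terms are handled the same way as in the $E$ equation. The only new feature is the cubic term. Writing $\bar E = \frac{E(t_{n+1})+E(t_n)}{2} = E(t_{n+\frac12}) + \delta$ with $|\delta|\le C\Delta t^2$, we get
\[
\bar E^3 - E(t_{n+\frac12})^3 = \delta\bigl(\bar E^2 + \bar E\,E(t_{n+\frac12}) + E(t_{n+\frac12})^2\bigr).
\]
The right-hand side is bounded by $3M^2 C\Delta t^2$, where $M=\|E\|_{L^\infty}$ is finite because $E\in C^3([0,T];C^3(\overline\Omega))$ on a compact set. So the nonlinear term also contributes only $\mathcal{O}(\Delta t^2)$, and the $P$ residual reduces to the third equation of \eqref{MD} plus $\mathcal{O}(\Delta t^2)$.

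\textbf{Uniformity.} Because every constant depends only on sup norms of third derivatives of $E,H,P$, on $M$, and on the fixed parameters $\mu_0,\varepsilon_0,\varepsilon_\infty,\varepsilon_d,\tau,\sigma,\beta$, the bounds hold uniformly in $n$ and $j$. Taking the maximum gives the claim.

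The main obstacle I expect is the nonlinear term, which requires the a priori boundedness of $E$ to control the factor $\bar E^2+\bar E E+E^2$. The regularity hypothesis supplies this, so the step should go through. A minor bookkeeping issue is checking that the $H$ stencils at the boundary half-points use only interior values; this holds by the definition of $\Omega_h^H$.
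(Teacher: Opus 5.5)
Your proposal is correct and follows essentially the paper's route: the paper omits this proof as the $Q=0$ special case of the PCE truncation-error theorem, which uses the same ingredients, namely Taylor expansion of each residual about its staggered center $(t_n,z_{j+\frac12})$ or $(t_{n+\frac12},z_j)$, the centered-difference and midpoint-averaging identities, and cancellation via the continuous equations. Your factorization $\bar E^3-E(t_{n+\frac12})^3=\delta\bigl(\bar E^2+\bar E\,E(t_{n+\frac12})+E(t_{n+\frac12})^2\bigr)$ with $|E|\le M$ just makes explicit the cubic-term step that the paper delegates to a citation. One small correction to your bookkeeping aside: the $H$ stencils are fine, but the $E$ stencil at $j=0$ and $j=L$ would need $H$ at $z_{-1/2}$ and $z_{L+1/2}$, which lie off the grid, so the maximum should be taken over interior $j$, with the boundary nodes governed by the boundary conditions.
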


\section{Polynomial Chaos Expansion}\label{sec:pce} 
To efficiently solve the Maxwell-Random Debye system \eqref{eq:2.12}, we seek the expected value of the random polarization. A traditional approach applies numerical quadrature to this expectation \cite{banks2006electromagnetic}, resulting in a weighted sum of deterministic Debye models, which is equivalent to the classical multi-pole Debye approximation \cite{hurt2007multiterm}. Instead, we employ a spectral stochastic method known as Polynomial Chaos expansion (PCE) \cite{xiu2002wiener,xiu2010numerical}, which represents stochastic solutions as truncated expansions of orthogonal polynomials in the random input. Different types of orthogonal polynomials may be chosen to achieve better convergence, sometimes exponential. The chosen polynomial basis is tailored to the probability distribution of the input variables, ensuring orthogonality with respect to the associated measure. This approach transforms the stochastic system into a deterministic set of auxiliary equations for the Polynomial Chaos coefficients. We begin by recalling the three-dimensional random nonlinearly forced Debye model \eqref{eq:2.12}. For the numerical scheme developed in this section, we restrict attention to its one-dimensional reduction, given by
\begin{equation}\label{eq: 2.13}
\tau(\xi)\frac{\partial\mathcal{P}}{\partial t} + \mathcal{P}
= \epsilon_0(\epsilon_s - \epsilon_\infty)\left(E + h(E)\right), \quad \tau(\xi)=\tau_r\,\xi+\tau_m
\end{equation}
We approximate the random polarization field in the random dimension by expanding it in a truncated orthogonal polynomial basis:
\begin{equation} \label{eq: 2.14}
    \mathcal P(t,\xi) =\sum_{i=0}^Q\alpha_i(t)\,\varphi_i(\xi)
\end{equation}
where $\varphi_i(\xi)$ is the $i^{\text{th}}$ orthogonal polynomial spanning our random space and $\xi$ is a random variable with some standard distribution (e.g. $\xi \sim \mathcal{B}(a,b)$ or $\mathcal{U}[-1,1]$), and where $\tau_m$ and $\tau_r$ are shift and scaling parameters, respectively, with $\tau_m,\tau_r>0$. Plugging \eqref{eq: 2.14} into \eqref{eq: 2.13}, we obtain 
\begin{equation} \label{eq: 2.15}
(\tau_r\,\xi+\tau_m)\sum_{i=0}^Q\dot\alpha_i\,\varphi_i+\sum_{i=0}^Q\alpha_i\,\varphi_i=\epsilon_0(\epsilon_s-\epsilon_\infty)(E+h(E))
\end{equation}
All orthogonal polynomials have a recurrence relation which can be expressed in the form \cite{xiu2010numerical, lucor2004generalized}: 
\begin{equation}\label{eq: 2.16}
    \xi\,\varphi_i(\xi) = a_i\,\varphi_{i+1}(\xi)+b_i\,\varphi_i(\xi)+ c_i\,\varphi_{i-1}(\xi)
\end{equation}
valid for \(i=0,\dots,Q\) (with the convention \(\varphi_{-1}=0\)). Substituting \eqref{eq: 2.16} into \eqref{eq: 2.15} to remove the dependence on the random variable $\xi$ gives 
\begin{equation}\label{eq: 2.17}
\tau_r\sum_{i=0}^Q\dot\alpha_i\bigl(a_i\varphi_{i+1}+b_i\varphi_i+c_i\varphi_{i-1}\bigr)
+\tau_m\sum_{i=0}^Q\dot\alpha_i\,\varphi_i
+\sum_{i=0}^Q\alpha_i\,\varphi_i
=\epsilon_0(\epsilon_s-\epsilon_\infty)(E+h(E))
\end{equation}
Projecting onto the finite dimensional random space spanned by $\{\varphi_j(\xi)\}_{j=0}^Q$ by taking the weighted inner product with each basis function produces for each \(j=0,\dots,Q\),
\begin{align}
&\tau_r\sum_{i=0}^Q\dot\alpha_i\bigl(a_i\langle\varphi_{i+1},\varphi_j\rangle_w
+b_i\langle\varphi_i,\varphi_j\rangle_w
+c_i\langle\varphi_{i-1},\varphi_j\rangle_w\bigr) \notag\\
&\quad+\;\tau_m\sum_{i=0}^Q\dot\alpha_i\,\langle\varphi_i,\varphi_j\rangle_w
+\sum_{i=0}^Q\alpha_i\,\langle\varphi_i,\varphi_j\rangle_w
=\epsilon_0(\epsilon_s-\epsilon_\infty)(E+h(E))\,\langle1,\varphi_j\rangle_w.
\end{align}
In the above, $\langle \varphi_i,\varphi_j\rangle_w.$ is the weighted inner product of $\varphi_i$ and $\varphi_j$ and is defined to be: 
\begin{equation}
    \langle \varphi_i,\varphi_j\rangle_w = \int_\Gamma \varphi_i(\xi)\,\varphi_j(\xi)\,W(\xi)\,d\xi = \delta_{ij} \sqrt{h_i\, h_j}
\end{equation}
where $\delta_{ij}$ is the Kronecker delta function, $W(\xi)$ is chosen to be the weighting function with respect to which polynomials are orthogonal over the domain $\Gamma$ and $h_i = \langle\varphi_i,\varphi_i\rangle_w$. Then the \(j\)th equation becomes
\[
\tau_r\sum_{i=0}^Q\dot\alpha_i\,(a_i\,\delta_{\,i+1,j}+b_i\,\delta_{\,i,j}+c_i\,\delta_{\,i-1,j})\,h_j
+\tau_m\,\dot\alpha_j\,h_j
+\alpha_j\,h_j
=\epsilon_0(\epsilon_s-\epsilon_\infty)(E+h(E))\,\delta_{0j}\,h_0.
\]
Dividing through by \(h_j\) and letting \(\alpha=(\alpha_0,\dots,\alpha_p)^T\), we obtain the vector form, 
\begin{equation} \label{eq: 2.20}
\bigl(\tau_r\,\mathcal{M}+\tau_m\,I\bigr)\dot{\vec{\alpha}} +\vec{\alpha} = \vec{f} \implies A\dot{\vec{\alpha}} + \vec{\alpha} =  \vec{f}, 
\end{equation}
where  for \(i,j=0,\dots,Q\), the matrix \(\mathcal{M}\) is defined as $\mathcal{M}_{j,i}=a_i\,\delta_{\,j,i+1}+b_i\,\delta_{\,j,i}+c_i\,\delta_{\,j,i-1}$, and the forcing vector $\vec{f}= \epsilon_0 (\epsilon_s - \epsilon_\infty)(E + h(E))\hat{e}_1$ where $a_i$, $b_i$, and $c_i$ are the recursion coefficients. Note that the deterministic value $\vec{f}$ forces the system and is dependent on the electric field, which itself depends on the expected value 
\begin{equation}
   P = \mathbb{E}[\mathcal{P}] \approx \alpha_0.
\end{equation}
If the chosen set of orthogonal polynomials are orthogonal with respect to the density function of the random variable $\xi$, then the error in the projection, and in the mean value, converges exponentially with $Q$ \cite{bela2010generalized}. We replace the RODE \eqref{eq:2.12} with the deterministic system of ODEs \eqref{eq: 2.20}. Concretely, the \textit{nonlinearly forced Maxwell-PC Debye system} takes the form
\begin{subequations} \label{MPC}
    \begin{align}
    \mu_0\,\frac{\partial H}{\partial t} &= -\,\frac{\partial E}{\partial z}\\
    \varepsilon_0\,\varepsilon_\infty\frac{\partial E}{\partial t} =-\sigma E &- J_s -\frac{\partial \alpha_0} {\partial t}+\frac{\partial H}{\partial z} \\
    A\dot{\alpha} + \alpha &=  \epsilon_0 \epsilon_d(E + \beta E^3)\hat{e}_1. \label{4.11c}
\end{align}
\end{subequations}

\begin{remark}
For Beta random variables, the Jacobi polynomials are used, and for uniform random variables, Legendre polynomials are used~\cite{xiu2010numerical}. As an example, if we choose a uniform random variable, then for the case when $Q = 2$ we obtain the following system for the PC Debye model:
\[
A = \begin{pmatrix}
\tau_m & \frac{1}{3}\tau_r & 0 \\
\tau_r & \tau_m & \frac{2}{5}\tau_r \\
0 & \frac{2}{3}\tau_r & \tau_m
\end{pmatrix}.
\]
However, if $\xi \sim \mathcal{B}(2,5)$, and $Q=2$, then
\[
A = \begin{pmatrix}
\frac{1}{3}\tau_r + \tau_m & \frac{2}{5}\tau_r & 0 \\
\frac{2}{9}\tau_r & \frac{7}{33}\tau_r + \tau_m & \frac{14}{33}\tau_r \\
0 & \frac{18}{55}\tau_r & \frac{21}{143}\tau_r + \tau_m
\end{pmatrix}.
\]
Clearly as $\tau_r \to 0$ then the PC Debye model converges to a diagonal, uncoupled Debye ODE with a single relaxation time of $\tau = \tau_m$, which can be thought of as modeled by a delta distribution.
\end{remark}

\begin{remark}[The Invertibility of $A$]
In order to solve the PC system for $\dot{\vec\alpha}$, the matrix $A$ must be non-singular. This is true as long as $\tau_r < \tau_m$ since the eigenvalues of $\mathcal{M}$ are in $(-1,1)$~\cite{golub1969calculation}, and therefore the eigenvalues of $A$ are greater than $\tau_m - \tau_r$, thus $A$ is in fact positive definite. The reader is encouraged to see \cite{oguadimma2025analysis} for a detailed discussion on this. 
\end{remark}

Applying the discretizations discussed in the preceding section to \eqref{MPC} gives 
\begin{subequations}\label{eq: 3.21}
    \begin{align}
    \mu_0 \frac{H_{j+\frac{1}{2}}^{n+\frac{1}{2}} - H_{j+\frac{1}{2}}^{n-\frac{1}{2}}}{\Delta t} &= - \frac{E^n_{j+1}-E^n_j}{\Delta z} \label{eq: 3.21a}\\
    \varepsilon_0 \varepsilon_\infty\frac{E^{n+1}_j - E^n_j}{\Delta t} = - \frac{H_{j+\frac{1}{2}}^{n+\frac{1}{2}} - H_{j-\frac{1}{2}}^{n+\frac{1}{2}}}{\Delta z} & -\frac{\alpha_{0_j}^{n+1} - \alpha_{0_j}^{n}}{\Delta t} -\sigma \frac{E^{n+1}_j + E^n_j}{2} -J^{n+\frac{1}{2}}_s \label{eq: 3.21b}\\
    A\,\frac{\vec{\alpha_j}^{n+1} - \vec{\alpha_j}^{n}}{\Delta t} + \frac{\vec{\alpha_j}^{n+1} + \vec{\alpha_j}^{n}}{2} &= \varepsilon_0\varepsilon_d \left(\frac{E^{n+1}_j + E^n_j}{2} + \beta\left(\frac{E^{n+1}_j + E^n_j}{2}\right)^3\right)\hat{e}_1  \label{eq: 3.21c}
\end{align}
\end{subequations}
where $\alpha_0^n \in V_E$, while the grid function $\vec{\alpha}$ is taken to be from the following staggered $\ell^2$
normed space
\[
V_\alpha := \left\{ \vec{\alpha}:\tau_h^{E}\to \mathbb{R}^{Q+1}
\;\bigg|\;
\vec{\alpha}=[\alpha_0,\ldots,\alpha_Q],\, \alpha_k\in V_E,\, \|\vec{\alpha}\|_\alpha<\infty
\right\},
\]
where the discrete $L^2$ grid norm is defined as
\begin{align*}\label{PCEnorm}
\|\vec{\alpha}\|_\alpha^{2}=\sum_{k=0}^{Q}\|\alpha_k\|_{E}^{2},\qquad \forall\,\vec{\alpha}\in V_\alpha,
\end{align*}
We now show that the Yee leapfrog discretization in \eqref{eq: 3.21} is second-order accurate in both space and time.
\begin{theorem}
Suppose that the solutions to the system \eqref{MPC} are smooth enough and satisfy the regularity conditions $E,H \in C^{3}\big([0,T];[C^{3}(\Omega)]\big)$, $\alpha_k \in C^{3}\big([0,T];[C^{3}(\Omega)]\big)\ \text{for }k=0,1,\dots,Q,$ and write $\vec{\alpha}:=(\alpha_0,\alpha_1,\dots,\alpha_Q)^\top$. Let
$\xi^{n}_{H,j+\frac12}$, $\xi^{n+\frac12}_{E,j}$, and $\vec{\xi}^{\,\,n+\frac12}_{\alpha,j}$
be the truncation errors for the Yee discretizations in \eqref{eq: 3.21}. Then for any $j$ and $n$, we have
\[
\max_{n,j}\Big\{
\big|\xi^{n}_{H,j+\frac12}\big|,
\big|\xi^{n+\frac12}_{E,j}\big|,
\big\|\vec{\xi}^{\,\,n+\frac12}_{\alpha,j}\big\|
\Big\}
= \mathcal{O}(\Delta z^{2}+\Delta t^{2}).
\]

\end{theorem}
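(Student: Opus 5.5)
We plan a standard Taylor-expansion argument, done separately for each of the three equations in \eqref{eq: 3.21}. Define the truncation errors by inserting the exact solution of \eqref{MPC} into the scheme. For the magnetic update \eqref{eq: 3.21a} this gives
\[
\xi^{n}_{H,j+\frac12}=\mu_0\frac{H(t_{n+\frac12},z_{j+\frac12})-H(t_{n-\frac12},z_{j+\frac12})}{\Delta t}+\frac{E(t_n,z_{j+1})-E(t_n,z_j)}{\Delta z}.
\]
The errors $\xi^{n+\frac12}_{E,j}$ and $\vec{\xi}^{\,n+\frac12}_{\alpha,j}$ are defined analogously from \eqref{eq: 3.21b} and \eqref{eq: 3.21c}. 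Each of them is then compared with the continuous equation \eqref{MPC} evaluated at the natural centering point: $(t_n,z_{j+\frac12})$ for the $H$ equation, and $(t_{n+\frac12},z_j)$ for the $E$ and $\vec\alpha$ equations. Because the continuous equation vanishes at that point, the truncation error reduces to a sum of differences between discrete operators and the derivatives or values they approximate.

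The key steps are the following.

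\textbf{Centered difference quotients.} Each centered difference is expanded about its midpoint with Taylor's theorem and an integral (or Lagrange) remainder. For example,
\[
\frac{u(t+\frac{\Delta t}{2})-u(t-\frac{\Delta t}{2})}{\Delta t}=u_t(t)+\frac{\Delta t^2}{24}u_{ttt}(\theta).
\]
The spatial differences behave the same way, with an error bounded by $\frac{\Delta z^2}{24}\|\partial_z^3 u\|_\infty$. The assumed $C^3$ regularity of $E$, $H$ and each $\alpha_k$ makes these bounds uniform in $n$ and $j$. This handles the $H$, $E$ and $\alpha_0$ difference quotients, as well as the componentwise quotient $A(\vec\alpha^{n+1}-\vec\alpha^n)/\Delta t$; the last contributes at most $\|A\|\,\frac{\Delta t^2}{24}\max_k\|\partial_t^3\alpha_k\|_\infty$, since $A$ is a constant matrix.

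\textbf{Time averages.} The averages $\frac{u^{n+1}+u^n}{2}$ are compared with $u(t_{n+\frac12})$; the error is $\frac{\Delta t^2}{8}u_{tt}(\theta)$. This covers the conductivity term $\sigma\frac{E^{n+1}+E^n}{2}$ and the linear term $\frac{\vec\alpha^{n+1}+\vec\alpha^n}{2}$. The source $J_s^{n+\frac12}$ is sampled exactly at the centering point and contributes no error.

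\textbf{Nonlinear forcing.} This is the one step that is not a verbatim copy of the linear Debye analysis, and we expect it to be the main point requiring care. Write $\bar E=\frac{E(t_{n+1})+E(t_n)}{2}=E(t_{n+\frac12})+\delta$ with $|\delta|\le \frac{\Delta t^2}{8}\|E_{tt}\|_\infty$. Then
\[
\bar E^3-E(t_{n+\frac12})^3=\delta\bigl(\bar E^2+\bar E\,E(t_{n+\frac12})+E(t_{n+\frac12})^2\bigr),
\]
which is bounded by $3\|E\|_\infty^2|\delta|=\mathcal{O}(\Delta t^2)$, because $E$ is continuous on the compact set $[0,T]\times\overline\Omega$ and hence bounded. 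Multiplying by $\varepsilon_0\varepsilon_d\beta$ and by the fixed vector $\hat e_1$ preserves this order. The vector $\vec\xi_\alpha$ has only $Q+1$ components, each bounded by the same type of estimate, so any norm of it is $\mathcal{O}(\Delta t^2)$ with a constant depending on $Q$, $\|A\|$, $\beta$ and the solution norms.

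\textbf{Conclusion.} Collecting the three estimates and taking the maximum over $n$ and $j$ gives the bound $\mathcal{O}(\Delta z^2+\Delta t^2)$. The statement for the scalar scheme \eqref{3.9} follows as the special case $Q=0$, $A=\tau$.
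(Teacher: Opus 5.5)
Your proposal is correct and is essentially the paper's proof. It defines the truncation errors by inserting the exact solution into \eqref{eq: 3.21}, Taylor-expands the centered differences and time averages about $(t_n,z_{j+\frac12})$ and $(t_{n+\frac12},z_j)$, and cancels against the continuous equations; as in the paper's proof, that cancellation only works with Amp\`ere's law carrying $-H_z$ to match the scheme, not the $+\partial H/\partial z$ printed in \eqref{MPC}. Your explicit factorization of $\bar E^3-E(t_{n+\frac12})^3$ simply supplies the detail for the cubic term that the paper defers to an earlier reference.
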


\begin{proof}
    For simplicity, we treat all physical parameters $\mu_0,\varepsilon_0,\sigma,\varepsilon_\infty,
\varepsilon_d,\beta$ and the matrix $A$ as fixed constants. Define the local truncation errors by
\[
\xi^{n}_{H,j+\frac12}
=\mu_0\frac{H\!\left(t_{n+\frac12},z_{j+\frac12}\right)-H\!\left(t_{n-\frac12},z_{j+\frac12}\right)}{\Delta t}
+\frac{E(t_n,z_{j+1})-E(t_n,z_j)}{\Delta z},
\]
\[
\begin{aligned}
\xi^{n+\frac12}_{E,j}
:=\varepsilon_0\varepsilon_\infty\frac{E(t_{n+1},z_j)-E(t_n,z_j)}{\Delta t}
&+\frac{H\!\left(t_{n+\frac12},z_{j+\frac12}\right)-H\!\left(t_{n+\frac12},z_{j-\frac12}\right)}{\Delta z}\\
&+\frac{\alpha_0(t_{n+1},z_j)-\alpha_0(t_n,z_j)}{\Delta t}
+\sigma\frac{E(t_{n+1},z_j)+E(t_n,z_j)}{2}
+J_s\!\left(t_{n+\frac12},z_j\right),
\end{aligned}
\]
and
\begin{align*}
    \vec{\xi}^{n+\frac12}_{\alpha,j}
:=A\frac{\vec{\alpha}(t_{n+1},z_j)-\vec{\alpha}(t_n,z_j)}{\Delta t}
&+\frac{\vec{\alpha}(t_{n+1},z_j)+\vec{\alpha}(t_n,z_j)}{2}\\
&-\varepsilon_0\varepsilon_d\left(
\frac{E(t_{n+1},z_j)+E(t_n,z_j)}{2}
+\beta\left(\frac{E(t_{n+1},z_j)+E(t_n,z_j)}{2}\right)^3
\right)\hat e_1.
\end{align*}
We use Taylor expansions at the natural staggered points. First, from \eqref{eq: 3.21a} we expand in time about $t_n$
and in space about $z_{j+\frac12}$, giving
\begin{align*}
    \frac{H\!\left(t_{n+\frac12},z\right)-H\!\left(t_{n-\frac12},z\right)}{\Delta t}
&=H_t(t_n,z)+\mathcal{O}(\Delta t^{2}),\\
\frac{E(t_n,z_{j+1})-E(t_n,z_j)}{\Delta z}
&=E_z(t_n,z_{j+\frac12})+\mathcal{O}(\Delta z^{2}).
\end{align*}
Therefore,
\[
\xi^{n}_{H,j+\frac12}
=\mu_0 H_t(t_n,z_{j+\frac12})+E_z(t_n,z_{j+\frac12})
+\mathcal{O}(\Delta t^{2}+\Delta z^{2}).
\]
Using the continuous Maxwell equation $\mu_0H_t+E_z=0$ evaluated at $(t_n,z_{j+\frac12})$ gives $\xi^{n}_{H,j+\frac12}=\mathcal{O}(\Delta t^{2}+\Delta z^{2}).$ Next, we expand the terms in \eqref{eq: 3.21b} at $(t_{n+\frac12},z_j)$. The standard midpoint identities yield
\begin{align*}
    \frac{E(t_{n+1},z_j)-E(t_n,z_j)}{\Delta t}
&=E_t(t_{n+\frac12},z_j)+\mathcal{O}(\Delta t^{2}),\\
\frac{\alpha_0(t_{n+1},z_j)-\alpha_0(t_n,z_j)}{\Delta t}
&=\alpha_{0_t}(t_{n+\frac12},z_j)+\mathcal{O}(\Delta t^{2}),\\
\frac{E(t_{n+1},z_j)+E(t_n,z_j)}{2}
&=E(t_{n+\frac12},z_j)+\mathcal{O}(\Delta t^{2}),
\end{align*}
and the centered spatial difference gives
\[
\frac{H\!\left(t_{n+\frac12},z_{j+\frac12}\right)-H\!\left(t_{n+\frac12},z_{j-\frac12}\right)}{\Delta z}
=H_z(t_{n+\frac12},z_j)+\mathcal{O}(\Delta z^{2}).
\]
Substituting these into the definition of $\xi^{n+\frac12}_{E,j}$ yields
\[
\xi^{n+\frac12}_{E,j}
=\varepsilon_0\varepsilon_\infty E_t(t_{n+\frac12},z_j)
+H_z(t_{n+\frac12},z_j)
+\alpha_{0_t}(t_{n+\frac12},z_j)
+\sigma E(t_{n+\frac12},z_j)
+J_s(t_{n+\frac12},z_j)
+\mathcal{O}(\Delta t^{2}+\Delta z^{2}).
\]
Using the continuous Amp\`ere law in the form $\varepsilon_0\varepsilon_\infty E_t+H_z+\alpha_{0_t}+\sigma E+J_s=0$ evaluated at $(t_{n+\frac12},z_j)$ gives $\xi^{n+\frac12}_{E,j}=\mathcal{O}(\Delta t^{2}+\Delta z^{2}).$ Finally, for \eqref{eq: 3.21c} (again at $(t_{n+\frac12},z_j)$), we use
\begin{align*}
   \left\| \frac{\vec{\alpha}(t_{n+1},z_j)-\vec{\alpha}(t_n,z_j)}{\Delta t} - \vec{\alpha}_t(t_{n+\frac12},z_j)\right\|
&=\mathcal{O}(\Delta t^{2}),\\
\left\|\frac{\vec{\alpha}(t_{n+1},z_j)+\vec{\alpha}(t_n,z_j)}{2}-\vec{\alpha}(t_{n+\frac12},z_j)\right\|
&=\mathcal{O}(\Delta t^{2}),
\end{align*}
and 
\[
    \frac{E(t_{n+1},z_j)+E(t_n,z_j)}{2}=E(t_{n+\frac12},z_j)+\mathcal{O}(\Delta t^{2}).
\]
Moreover, by the same argument as in \cite{oguadimma2025analysis}, one has
\[
\left(\frac{E(t_{n+1},z_j)+E(t_n,z_j)}{2}\right)^3
=E(t_{n+\frac12},z_j)^3+\mathcal{O}(\Delta t^{2}).
\]
Substituting into $\vec{\xi}^{\,\,n+\frac12}_{\alpha,j}$ gives
\[
\left\|\vec{\xi}^{\,\,n+\frac12}_{\alpha,j}-
A\vec{\alpha}_t(t_{n+\frac12},z_j)
+\vec{\alpha}(t_{n+\frac12},z_j)
-\varepsilon_0\varepsilon_d\Big(E(t_{n+\frac12},z_j)+\beta E(t_{n+\frac12},z_j)^3\Big)\hat e_1 \right\|
=\mathcal{O}(\Delta t^{2}).
\]
Using \eqref{4.11c}, $A\vec{\alpha}_t+\vec{\alpha}=\varepsilon_0\varepsilon_d\big(E+\beta E^3\big)\hat e_1$, evaluated at $(t_{n+\frac12},z_j)$ yields $\|\vec{\xi}^{\,\,n+\frac12}_{\alpha,j}\|=\mathcal{O}(\Delta t^{2})$. Combining the three estimates gives
\[
\max_{n,j}\Big\{
\big|\xi^{n}_{H,j+\frac12}\big|,
\big|\xi^{n+\frac12}_{E,j}\big|,
\big\|\vec{\xi}^{\,\,n+\frac12}_{\alpha,j}\big\|
\Big\}
= \mathcal{O}(\Delta z^{2}+\Delta t^{2}),
\]
which completes the proof.
\end{proof}
\begin{remark}
Although \cite{bela2010generalized} considers a different model, our numerical experiments in later sections indicate that analogous stability requirements also apply to the scheme \eqref{eq: 3.21}, namely $\tau_m>0$,\, $\varepsilon_d\ge 0$, \, $\nu < \dfrac{c\,\Delta t}{\Delta x}$.
\end{remark}

\section{Numerical Simulation}\label{sec:results}
In this section, we present numerical experiments that (i) validate the accuracy of the proposed discretizations for the one-dimensional Maxwell--Nonlinear Debye model developed in Section~\ref{LF} and (ii) illustrate the qualitative behavior of wave propagation in nonlinear and random media. 
\subsection{Accuracy Test: A Manufactured Solution}
A manufactured solution is used to demonstrate the accuracy of the scheme. Consider the 1D intrusive nonlinearly forced Maxwell-PC Debye system,
\begin{subequations}\label{eq: 4.7pce}
\begin{align}
H_t &= a_0\,E_z, \\
E_t &= a_1\,H_z + a_2\,\alpha_{0_t}, \\ 
A\,\dot{\vec{\alpha}} + \vec{\alpha} &= \varepsilon_0\varepsilon_d\big(E+\beta E^3\big)\,\hat e_1 + \vec{F},
\end{align}
\end{subequations}
on $\Omega=[0,\pi]$ with $\tau \sim \mathcal{U}(\tau_{a}, \tau_b)$, where $\hat e_1=(1,0,\ldots,0)^T\in\mathbb{R}^{Q+1}$ and $A\in\mathbb{R}^{(Q+1)\times(Q+1)}$.The exact solutions are $E(z,t) = \cos(t)\sin(z), \, H(z,t) = a_0\cos(z)\sin(t), \, \alpha_0(z,t) = -c\cos(t)\sin(z), \,
\alpha_\ell(z,t)=\cos(t)\sin(z),\,\ell=1,\ldots,Q$, with $a_0 = -1/\mu_0, a_1 = a_2= -1/\varepsilon_0 \varepsilon_\infty, c = \varepsilon_0\epsilon_\infty - 1/\mu_0$. The vector $\vec{F}$ is the forcing term to make \eqref{eq: 4.7pce} hold. Periodic boundary conditions are considered.
\begin{table}[H]
\vskip1ex
\caption{Convergence of the Leap-Frog scheme in the $L^2$ norm as $h$ is halved.}
\label{tab:lf_conv_L2_alpha}
\centering
\small
\setlength{\tabcolsep}{4pt}
\resizebox{\textwidth}{!}{%
\begin{tabular}{cccccccccc}
\toprule
$N$ & $h$ &
$\|E-E_h\|_{E}$ & rate &
$\|H-H_h\|_{H}$ & rate &
$\|\alpha_0-\alpha_{0,h}\|_{E}$ & rate &
$\|\vec{\alpha}-\vec{\alpha}_h\|_{\alpha}$ & rate \\
\midrule
50  & 0.02000 & $4.616\times10^{-4}$ & --    & $5.977\times10^{-4}$ & --    & $3.353\times10^{-4}$ & --    & $4.366\times10^{-4}$ & --    \\
100 & 0.01000 & $1.139\times10^{-4}$ & 2.019 & $1.469\times10^{-4}$ & 2.024 & $8.221\times10^{-5}$ & 2.028 & $1.072\times10^{-4}$ & 2.026 \\
200 & 0.00500 & $2.827\times10^{-5}$ & 2.010 & $3.642\times10^{-5}$ & 2.012 & $2.035\times10^{-5}$ & 2.014 & $2.657\times10^{-5}$ & 2.013 \\
400 & 0.00250 & $7.043\times10^{-6}$ & 2.005 & $9.067\times10^{-6}$ & 2.006 & $5.064\times10^{-6}$ & 2.007 & $6.613\times10^{-6}$ & 2.006 \\
800 & 0.00125 & $1.745\times10^{-6}$ & 2.013 & $2.262\times10^{-6}$ & 2.003 & $1.255\times10^{-6}$ & 2.012 & $1.641\times10^{-6}$ & 2.010 \\
\bottomrule
\end{tabular}%
}
\end{table}
Tables~\ref{tab:lf_conv_L2_alpha} reports the discrete $L^2$ and $L^\infty$ errors for $E(z,t), H(z,t)$ and the first few modes ($Q = 2$) of the PCE system, under successive mesh refinements. As the grid spacing is halved, the errors decrease by approximately a factor of four, andAcknowledgments the measured convergence rates remain close to two in both norms. Similar result is observed for higher values of $Q$. This confirms that the proposed Yee discretization achieves the expected second-order accuracy in space and time for the manufactured-solution test. We note that for $Q = 0$, the system corresponds to the deterministic model.

\subsubsection{Convergence in $Q$}
To validate convergence with respect to the chaos truncation order, we drive the system using a left-boundary hard source for the electric field,
\[
E(0,t)=\sin(2\pi f_0 t)\exp\!\left(-\frac{(t-t_0)^2}{s^2}\right),
\]
i.e., a Gaussian-modulated sinusoid. In the computations, we set $f_0=5$, $t_0=6/f_0$, and $s=1.5/f_0$. All fields are initialized to zero, and the right boundary is taken as perfectly conducting. For each order of truncation $Q$, we evolve the system in time and compute the PCE coefficients $\{\alpha_k(x,t)\}_{k=0}^{Q}$ throughout the simulation. The mean and variance are then formed at each time level,
\[
\mathbb{E}[P(x,t)]=\alpha_0(x,t),\qquad 
\mathrm{Var}[P(x,t)]=\sum_{k=1}^{Q}\frac{\alpha_k(x,t)^2}{2k+1},
\]
and the corresponding errors are reported as space--time integrated quantities over the $(x,t)$ domain.
\begin{figure}[H]
  \centering
  \begin{subfigure}{0.48\linewidth}
    \centering
    \includegraphics[width=\linewidth]{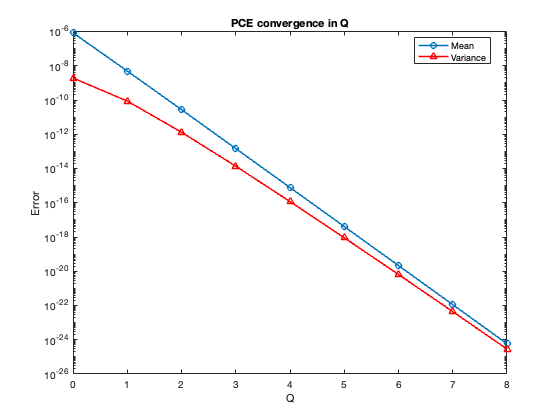}
    \caption{$\tau_r = 0.5\tau_m$}
  \end{subfigure}\hfill
  \begin{subfigure}{0.48\linewidth}
    \centering
    \includegraphics[width=\linewidth]{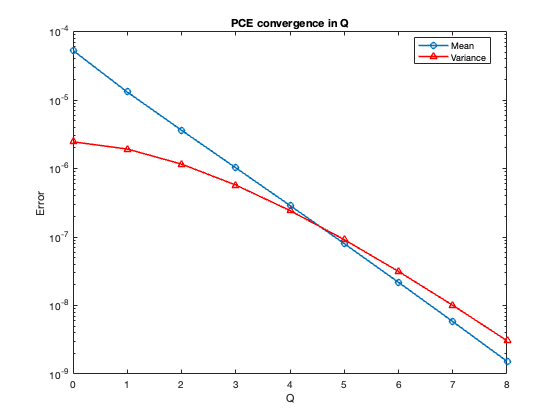}
    \caption{$\tau_r = 0.95\tau_m$}
  \end{subfigure}
  \caption{Error over spatial and temporal domain with Uniform random variable by Legendre Chaos.}
  \label{fig:pceconv}
\end{figure}
Figure~\ref{fig:pceconv} shows a monotone decay of the mean error, $\|\alpha_0 - \alpha_{0_{\text{ref}}}\|_{\alpha}$, as the chaos order $Q$ increases, with an approximately linear trend on the semi-log scale in both panels, consistent with exponential convergence. For the moderate uncertainty case $\tau_r=0.5\,\tau_m$ (left), the variance error exhibits a similar clean, near-linear decay. In contrast, when the randomness is increased to $\tau_r=0.95\,\tau_m$ (right), the variance curve departs from this simple linear trend and decays more irregularly at low-to-moderate orders. This behavior suggests that stronger parameter uncertainty excites higher-order stochastic content in the solution, so that more chaos modes are required before the variance converges in a smooth, asymptotically exponential manner. Nevertheless, as $Q$ increases, the variance error continues to decrease and approaches the same error floor observed for the mean.

\subsection{Time Domain Propagation}
We use the following parameters (assuming the dielectric parameters of water) \cite{banks2006electromagnetic}: 
\begin{align}\label{eq: 4.8}
    \epsilon_0 = 8.854 \times 10^{-12}, \; \epsilon_\infty &= 5.5, \; \epsilon_s = 80.1, \; \tau_m = 8.1 \times 10^{-12} \notag \\
    c = 2.9980 &\times 10^8,  \; \mu_0 = 1.2566 \times 10^{-6}, \; \sigma = 10^{-5}.
\end{align} 
We perform a series of numerical experiments with a windowed input signal (generated by an antenna at $z = \ell/2$):  
\begin{equation}
    J_s = |E|^2 \sin^3(\omega t) \chi_{[0, t_f]}(t) \delta (z - \ell/2)
\end{equation}
with $t_f = 5/\omega$, $\omega=2 \pi f$, and $f = 1, 10, 100$ GHz and 1 THz. The quantity $|E|^2$ is considered as a constant intensity of the source.  
We use absorbing boundary conditions at the end of the slab to prevent the reflection of waves 
\begin{align}
    \frac{\sqrt{\epsilon_s}}{c} \frac{\partial E}{\partial t} - \frac{\partial E}{\partial z} \bigg|_{z=0} &= 0, \; \; \; \; 
    \frac{\sqrt{\epsilon_s}}{c} \frac{\partial E}{\partial t} + \frac{\partial E}{\partial z} \bigg|_{z=\ell} = 0. 
\end{align} 
We complete the system with initial conditions 
\begin{align}
    E(0,z) = 0, \; H(0,z) = 0, \; P(0,z) = 0, \; J_s(0,z) = 0 \hspace{3mm} \forall \; z \in (0,\ell).
\end{align} 
The spatial resolution of the grid is \( \Delta z = \lambda / 400 \), $\lambda = c/f$, and the time step satisfies \( \Delta t = (\nu \Delta z) / c \) for a Courant number \( \nu = 0.5 \).  In the following, we present case studies for two frequencies: 100 GHz and 1 THz using parameter values $|E|^2 $= 100 V/m,\; $\beta = 5 \times 10^{-6}$,\; $\tau_r = 0.95\tau_m,\; Q=5$.

Our time-domain simulations in Fig.~\ref{fig: fig28}--\ref{fig: fig35} show that the electric field propagates slower and experiences greater attenuation in the nonlinear models compared to the linear models across all frequencies. At the lowest frequency ($f \leq 10$ GHz),  the electric field values are similar in both the deterministic and random models. However, as the frequency increases, the electric field propagates faster in the deterministic model while experiencing greater attenuation in the random model. The analysis and numerical results of propagation in deterministic model presented in this report are consistent with prior theoretical treatments and experimentally observed trends reported in \cite{pleshko1969precursors,jackson1998classical,albanese1993ultrashort, HTBanksGabriella}.

\begin{figure}[h]
    \centering
    \includegraphics[scale = 0.22]{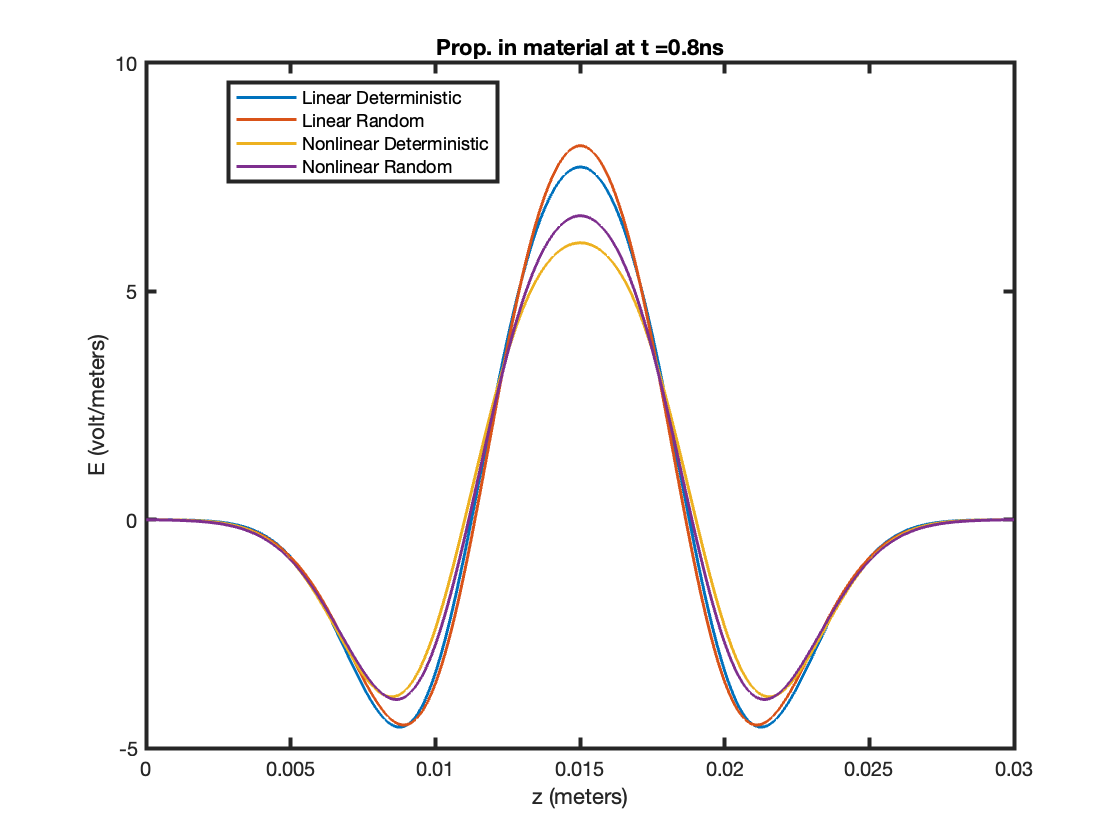}
    \caption{Snapshots of the Electric Field in the Material at $t = 0.8$ns}
    \label{fig: fig28}
\end{figure}

\begin{figure}[h]
    \centering
    \includegraphics[scale = 0.22]{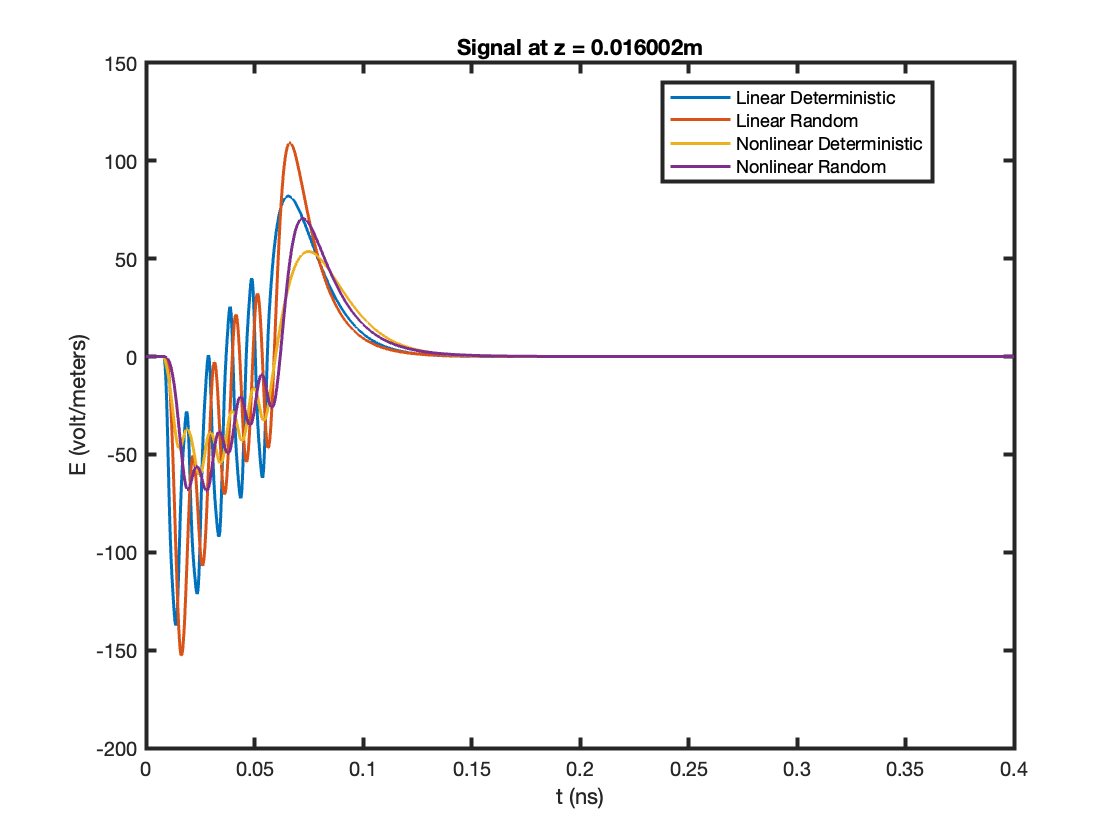}
    \includegraphics[scale = 0.22]{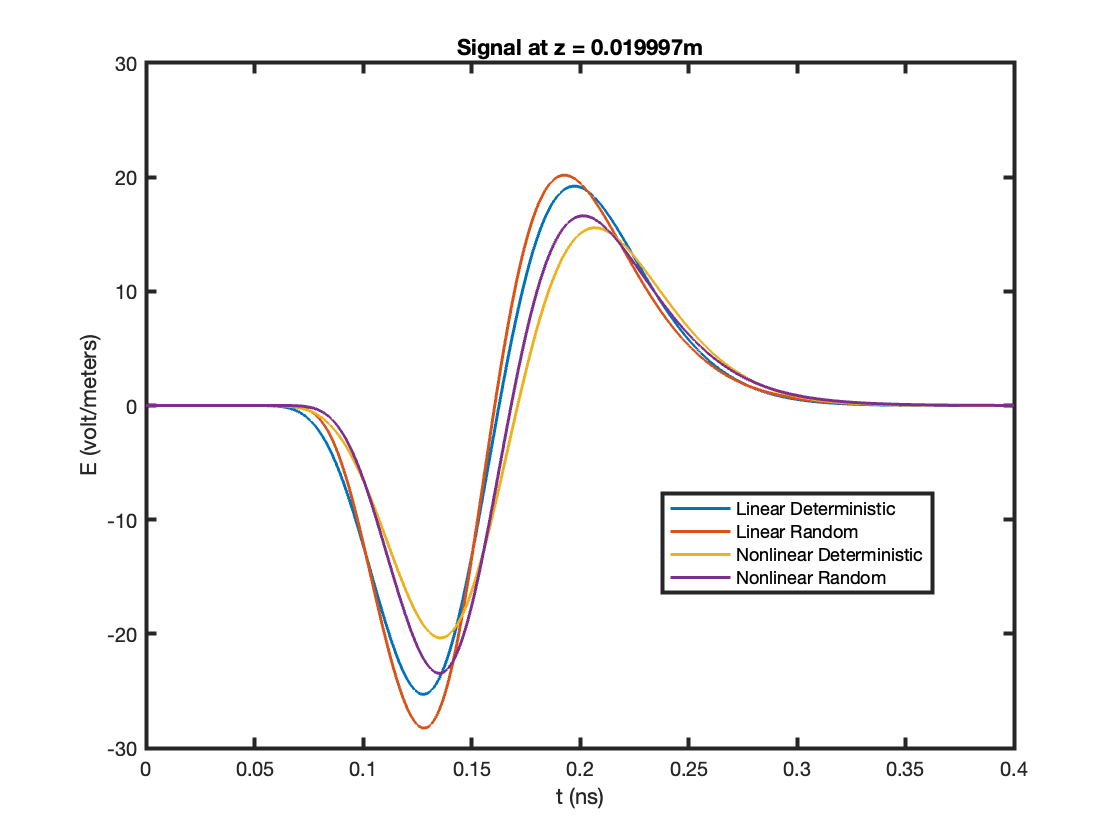}
    \caption{The electric field recorded at a distance of 0.016m and 0.02m from the antenna}
    \label{fig: fig29}
\end{figure}

\begin{figure}[h]
    \centering
    \includegraphics[scale = 0.2]{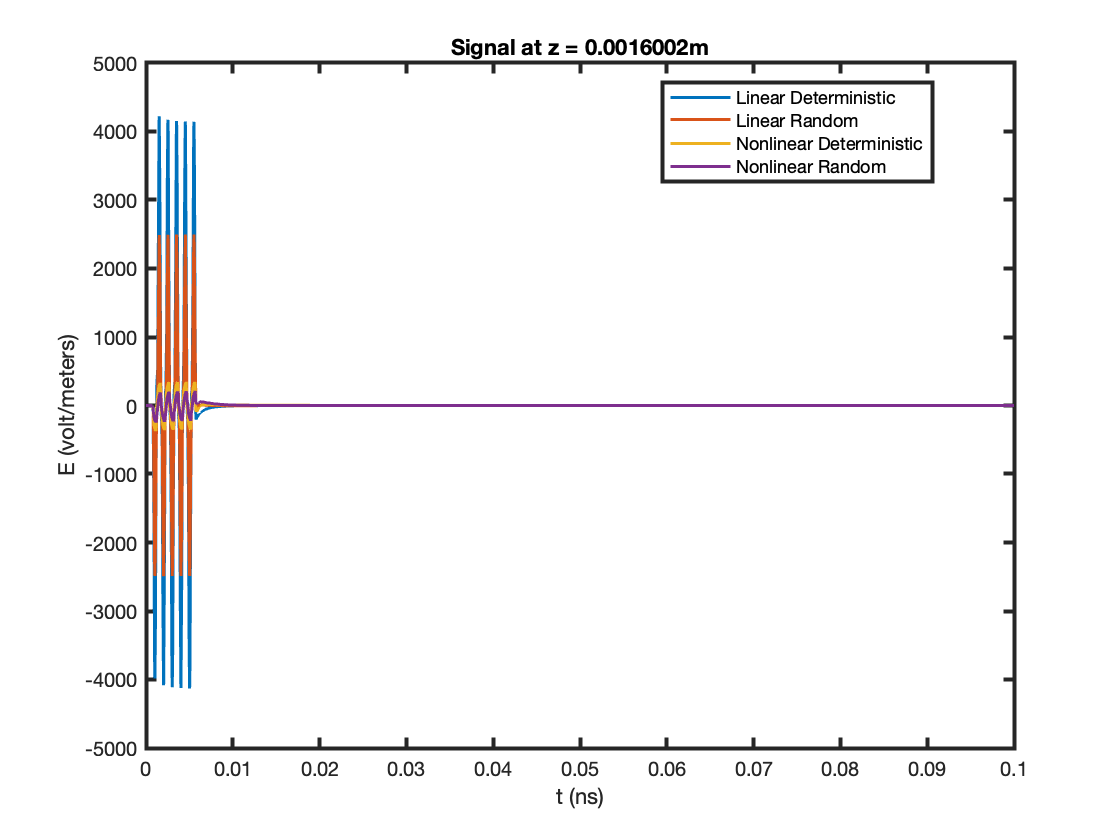}
    \includegraphics[scale = 0.2]{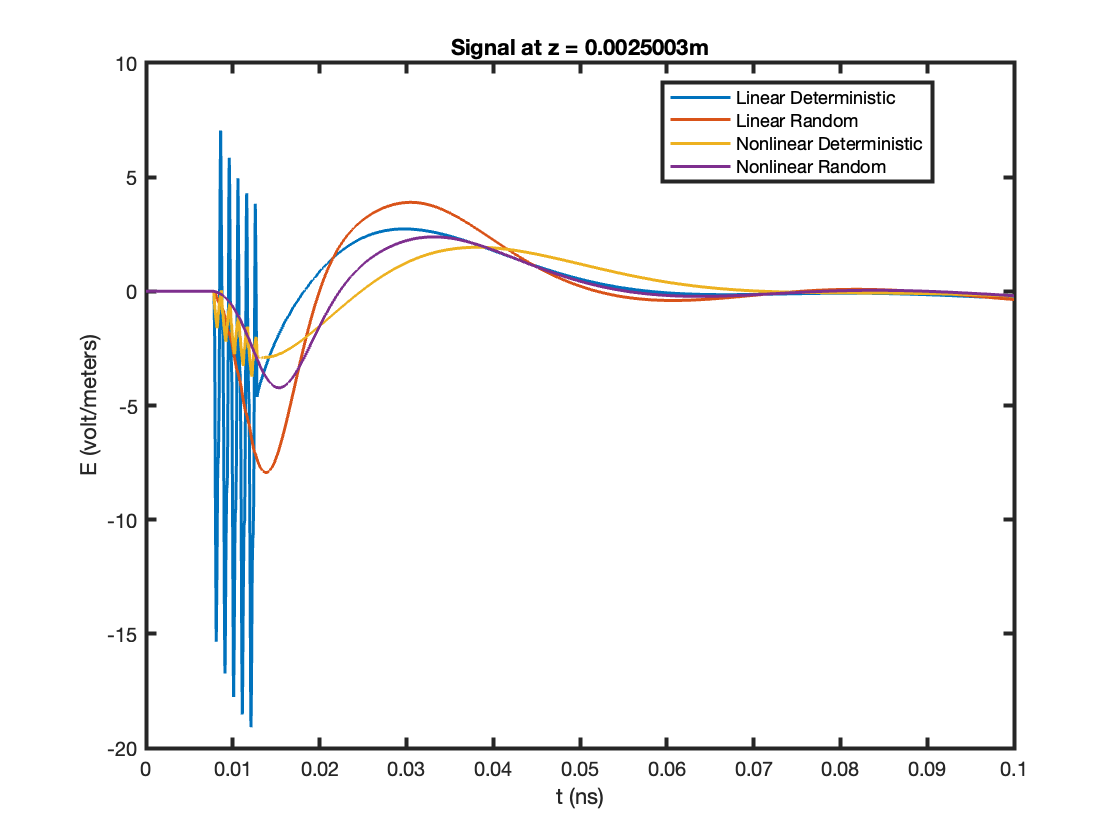}
    \caption{The electric field recorded at a distance of 0.0016m and 0.0025m from the antenna}
    \label{fig: fig35}
\end{figure}

\subsection{Fourier Transforms of Simulations} 
We define the power spectrum of the time-domain signal $E(\hat{z},t)$ at fixed \(\hat z\) by
\begin{align}
    P(\hat z,\omega)\;=\;\bigl|\widehat{E}(\hat z,\omega)\bigr|^2,
\end{align} 
where $\widehat{E}(\hat z,\omega)$ is the frequency-domain representation of the signal. We first record the electric field at a fixed receiver position throughout the simulation, then zero-pad the resulting discrete time series to the next power of two in order to enhance frequency resolution and computational efficiency for the FFT. After applying the FFT and normalizing by the total number of time steps, we extract the single-sided amplitude spectrum. The power spectrum was then computed as the squared magnitude of the single-sided FFT, yielding the distribution of signal power with respect to frequency. In the following, we present case studies for two frequencies: 100 GHz and 1 THz.

\ 

\noindent 
\textbf{Frequency: 100 GHz, $|E|^2 $: 100 V/m,\;$\tau_r = 0.95\tau_m,\; Q=5$}

\begin{figure}[h]
    \centering
    \includegraphics[scale = 0.2]{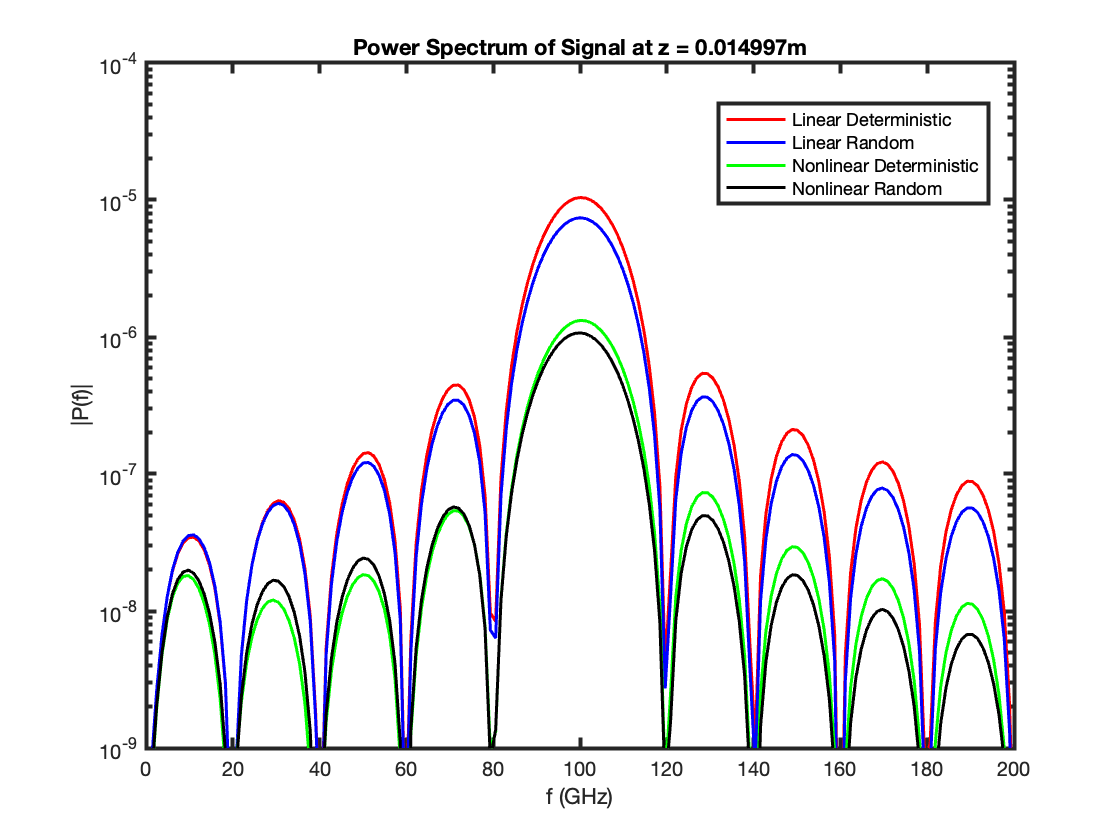}
    \includegraphics[scale = 0.2]{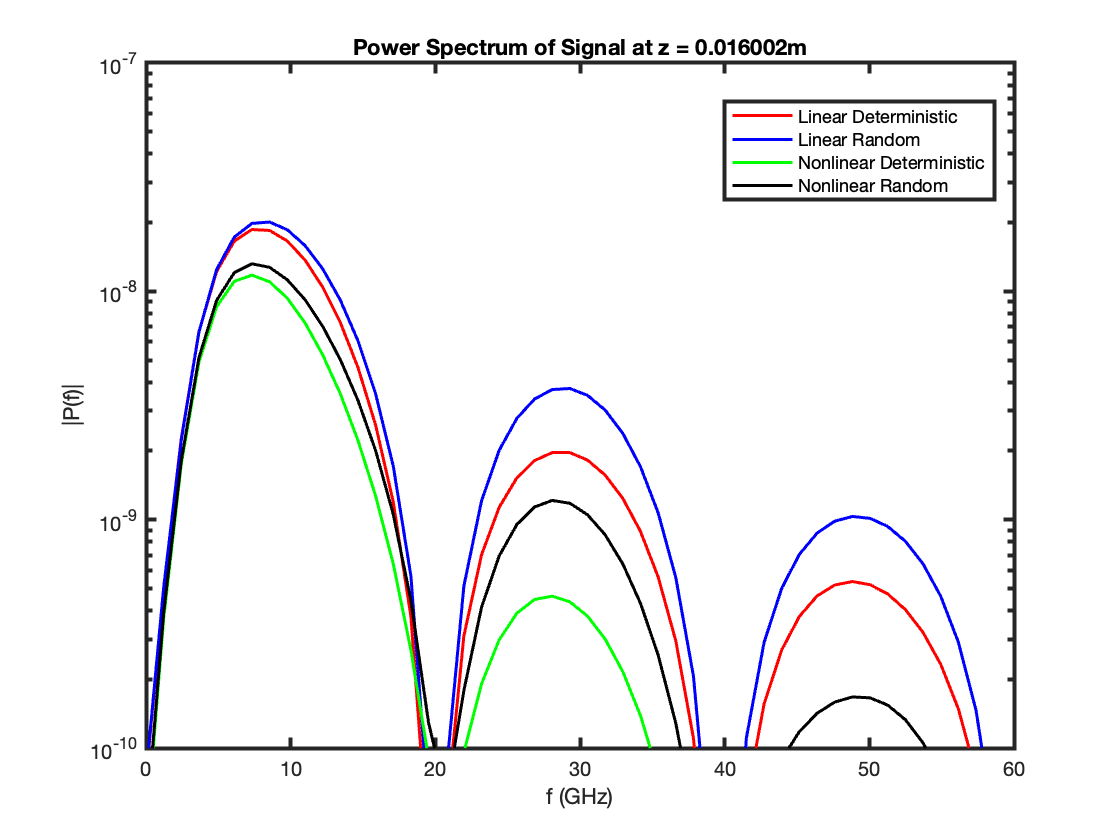}
    \caption{Power Spectrum of Signal at the antenna and at a distances of 0.015m and 0.016m from the antenna}
    \label{fig: fig37}
\end{figure}
\noindent 
\textbf{Frequency: 1 THz, $|E|^2 $: 100 V/m,\;$\tau_r = 0.95\tau_m,\; Q=5$}
\begin{figure}[h]
    \centering
    \includegraphics[scale = 0.2]{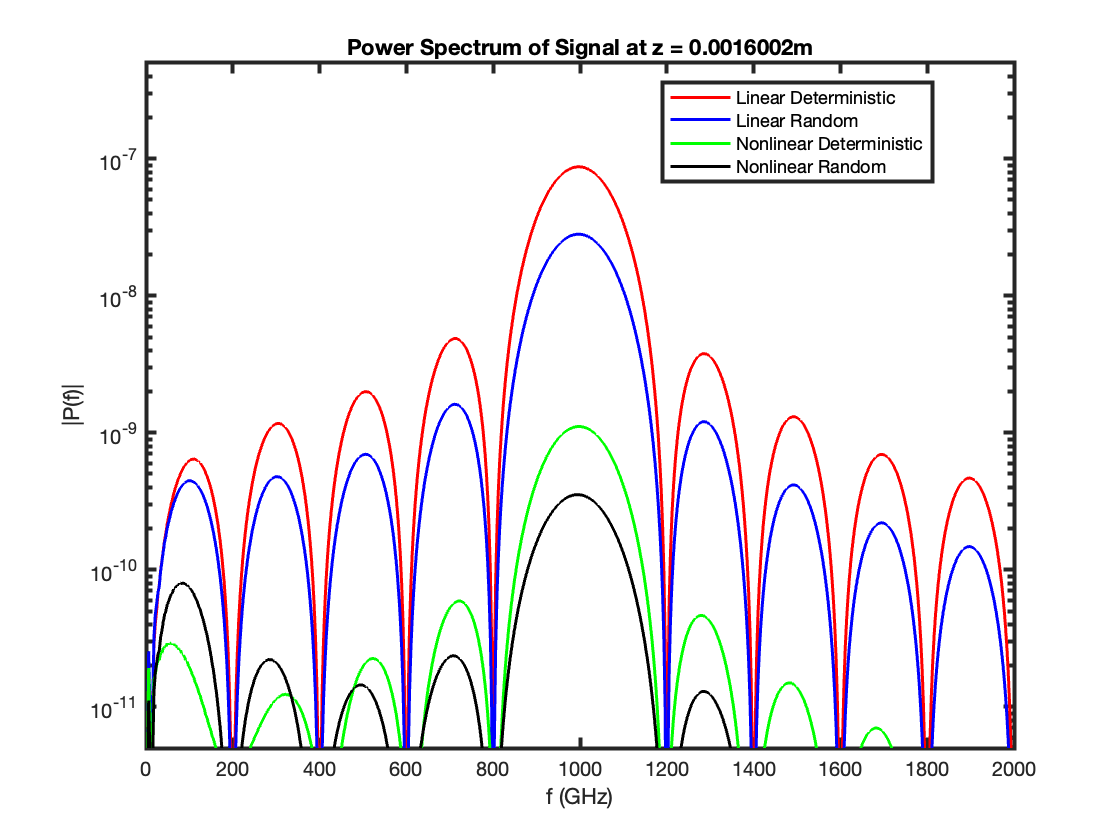}
    \includegraphics[scale = 0.2]{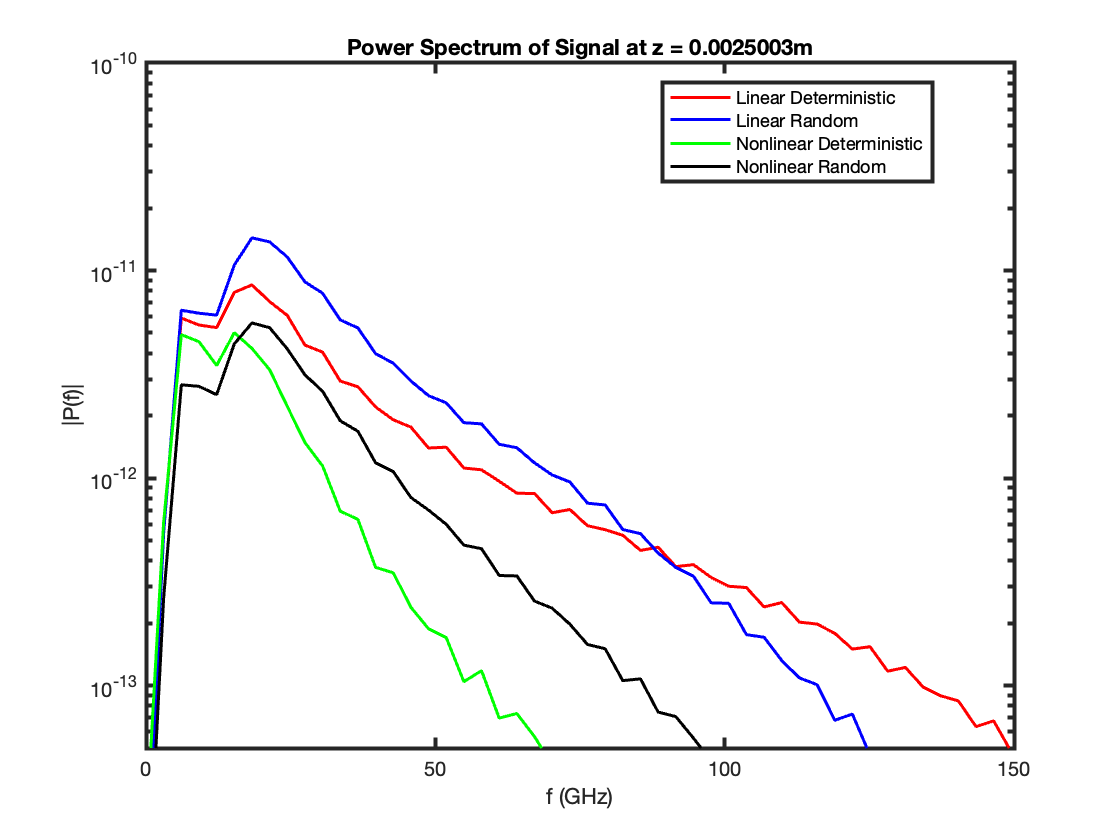}
    \caption{Power Spectrum of Signal at a distance of 0.0015m and 0.0025m from the antenna}
    \label{fig: fig38}
\end{figure}

For each frequency in Fig.~\ref{fig: fig37}--\ref{fig: fig38}, the plots on the left show that the linear model exhibits less attenuation than the more realistic nonlinear model. However, in both cases, the signal in the deterministic model is initially less attenuating, but switches to more attenuating deeper into the material. This clearly shows the complex interaction between random and nonlinear effects in time domain simulations.

\section{Inverse Problem}\label{sec:parameterIdentif} 
We consider an example material design problem, where we seek to determine the best set of physical parameters to achieve optimal wave behavior in the given medium. We define the design parameter set as 
\begin{align}
     \mathbf{q} = [\epsilon_\infty, \epsilon_d, \tau_r, \tau_m, \sigma, \beta]
\end{align}
The goal is to find the optimal parameters $\mathbf{q}^*$ that minimize the objective function 
\begin{align}
    \mathcal{T}(\mathbf{q}) = - \max_t|E(t, \hat{z}; \mathbf{q})| + \alpha \mathcal{J}_{H^1}(\mathbf{q}), 
\end{align}
where $|E(t, \hat{z}; \mathbf{q})|$ represents the electric field amplitude at a fixed distance $\hat{z}$ and $\mathcal{J}_{H^1}(\mathbf{q})$ is a penalty function defined as 
\begin{align}
     \mathcal{J}_{H^1}(\mathbf{q}) = \left(\int_t |E'(t, \hat{z};\mathbf{q})|^2 dt\right)^{1/2},
\end{align}
with $\alpha$ as a regularization parameter. 

\subsection{Numerical Results: Significance Test}
We use the parameters in \eqref{eq: 4.8} as the initial guess for the vector $\mathbf{q} = [\epsilon_\infty, \epsilon_d, \tau_r, \tau_m, \sigma, \beta]$
along with upper and lower bounds given by 
\begin{equation}
    \mathbf{q}_{\text{ub}} = [10,100,1,10^{-10}, 10^{-3}, 10^{-2}],\quad  \mathbf{q}_{\text{lb}} = [1,70,0,10^{-14}, 10^{-6}, 10^{-8}].
\end{equation}
We employ MATLAB’s \texttt{fmincon} function, using the interior‐point algorithm to solve the constrained optimization problem. 

We consider the impact of including randomness and nonlinearity in parameter identification for a dielectric material model. Our analysis focused on the utility of including two specific parameters: $\tau_r$ and $\beta$ for optimization purposes. The parameter $\tau_r$ represents a relaxation time variable whose inclusion introduces randomness into the model formulation. Conversely, the parameter $\beta$ governs a nonlinear term and thus introduces nonlinearity. Numerical experiments are conducted at two frequency levels (1 GHz and 1 THz) and for multiple $\alpha$ values to simulate different regularization strengths. The impact of $\tau_r$ and $\beta$ is assessed by comparing the final cost values between models where these parameters are included or excluded. A relative modeling error is computed to quantify the significance of the inclusion of each parameter.

\subsubsection{Inclusion of Randomness ($\tau_r$)}

\textbf{Frequency: 1 GHz. $\alpha = 10^{-11}$, $\tau_m = 8.1 \times 10^{-12}$ }

The following results show that the random model achieves a lower final cost, indicating that it better satisfies the optimization objective. 

        \begin{center}
        \begin{tabular}{ccc}
           \textbf{Model} & \textbf{Initial Cost}  & \textbf{Final Cost}\\
           \midrule 
           Deterministic &  $ 2.295439$ &  $-5.972467$ \\
           Random & $ 2.442979$ &  $-6.021237$ \\
        \end{tabular}            
        \end{center}

In practical terms, this suggests that including $\tau_r$ helps improve the overall optimization performance. To quantify the impact of $\tau_r$ on the optimization outcome, we compare the deterministic and random formulations. Let $\mathcal{T}(\tau_r=0)$ denote the final cost from the deterministic model and $\mathcal{T}(\tau_r \neq 0)$ the final cost from the random model. We define the relative error:
\begin{align}\label{relerr}
    e =\; \frac{|\mathcal{T}(\tau_r=0) \;-\; \mathcal{T}(\tau_r \neq 0)|}{\bigl|\mathcal{T}(\tau_r \neq 0)\bigr|}.
\end{align}
Using the values
\[
\mathcal{T}(\tau_r=0) = -5.972467, 
\quad 
\mathcal{T}(\tau_r \neq 0) = -6.021237,
\]
we obtain a relative error
\[
e =\;
\frac{|(-5.972467) - (-6.021237)|}{\lvert -6.021237\rvert}
\;=\;
\frac{0.04877}{6.021237}
\;\approx\;
0.0081
\] 

In Fig.~\ref{fig: fig42}, we plot the optimized electric-field responses for both the deterministic model ($\tau_r=0$) and the random model ($\tau_r\neq 0$). The two optimized signals are nearly indistinguishable, indicating that introducing relaxation-time uncertainty has little visible impact on the optimized waveform. Moreover, the reduction in the final cost due to the inclusion $\tau_r$ is marginal, corresponding to relative error of 0.81\%. This indicates that randomness has negligible effect on optimization in this setting.
\begin{figure}[h]
    \centering
     \includegraphics[scale = 0.2]{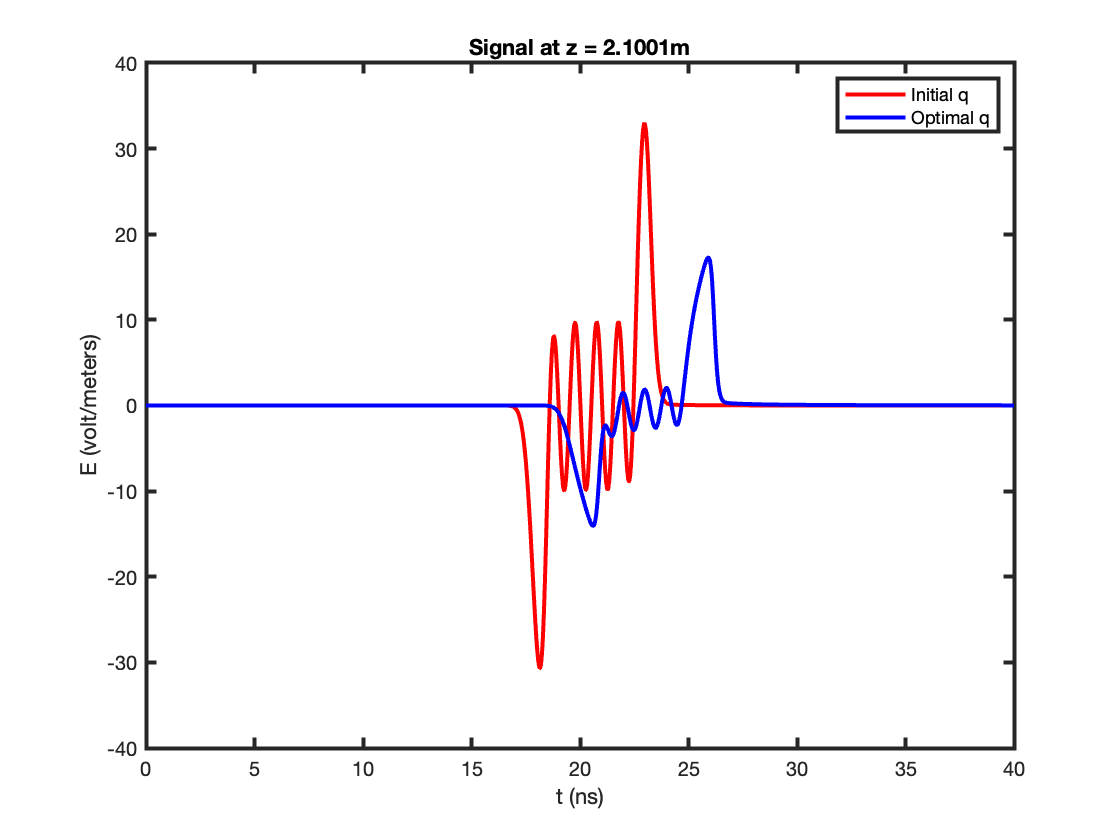}
     \includegraphics[scale = 0.2]{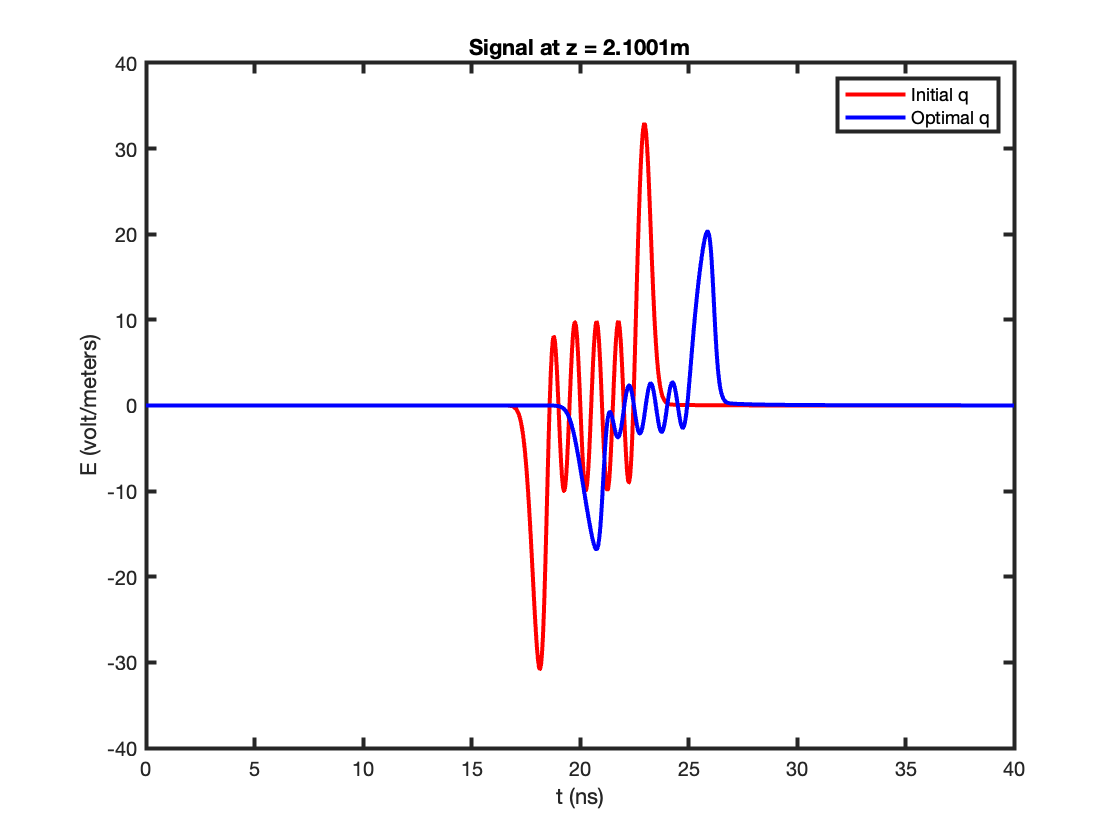}
    \caption{Optimization results for deterministic (left) and random (right) models}
    \label{fig: fig42}
\end{figure}

\subsubsection{Inclusion of Nonlinearity ($\beta$)}
\textbf{Frequency: 1 THz. $\alpha = 10^{-12}$, $\tau_m = 8.1 \times 10^{-12}$ }

For this case, we test the significance of including nonlinearity. The optimization results

        \begin{center}
        \begin{tabular}{ccc}
           \textbf{Model} & \textbf{Initial Cost}  & \textbf{Final Cost}\\
           \midrule 
           $\beta = 0$ &  $14.10884$ &  $4.857213$ \\
           $\beta \neq 0$ &  $  12.27078$ &  $2.762999$ \\
        \end{tabular}
        \end{center}
\noindent 
yield a relative error of $e = 0.758$.
The optimal plots in both panels of Fig.~\ref{fig: fig45} are somewhat different. Moreover, the relative error of $75.8\%$ indicates that, in this high frequency regime, neglecting the nonlinear coefficient $\beta$ severely degrades the optimization performance. Therefore, incorporating nonlinearity is essential to obtain effective and physically meaningful optimal solutions in this regime.
\begin{figure}[H]
    \centering
     \includegraphics[scale = 0.2]{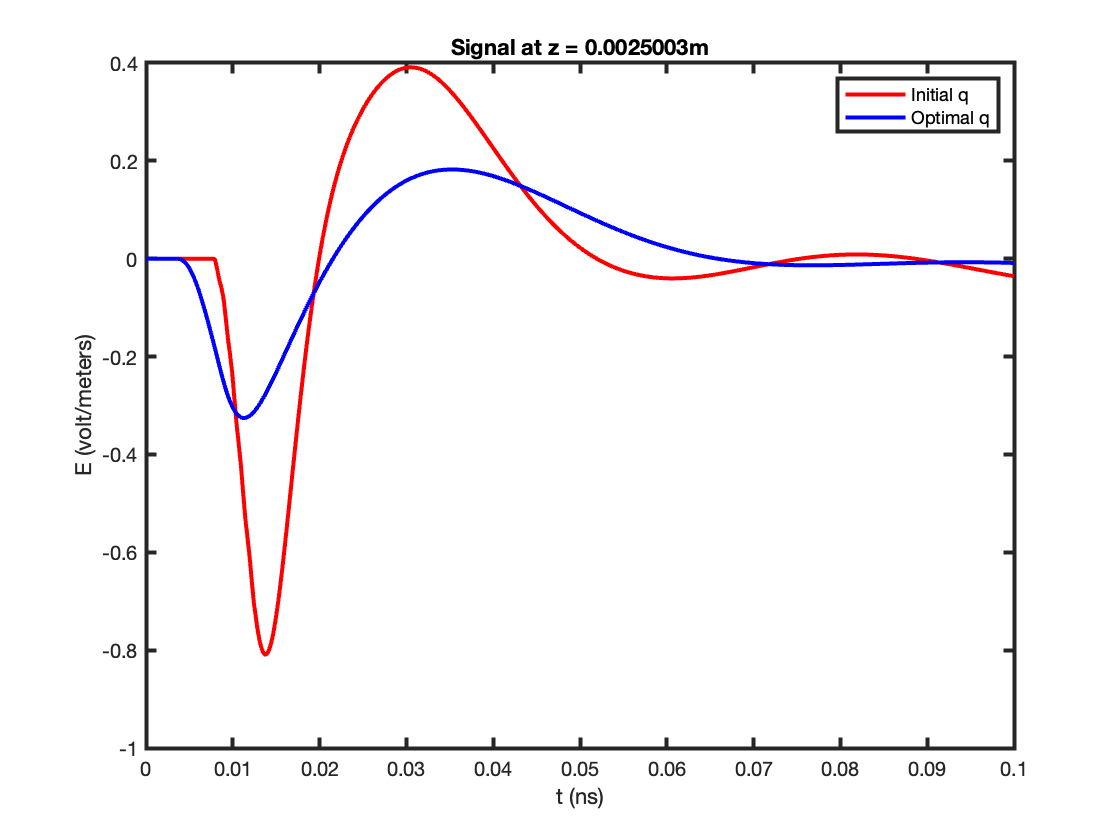}
     \includegraphics[scale = 0.2]{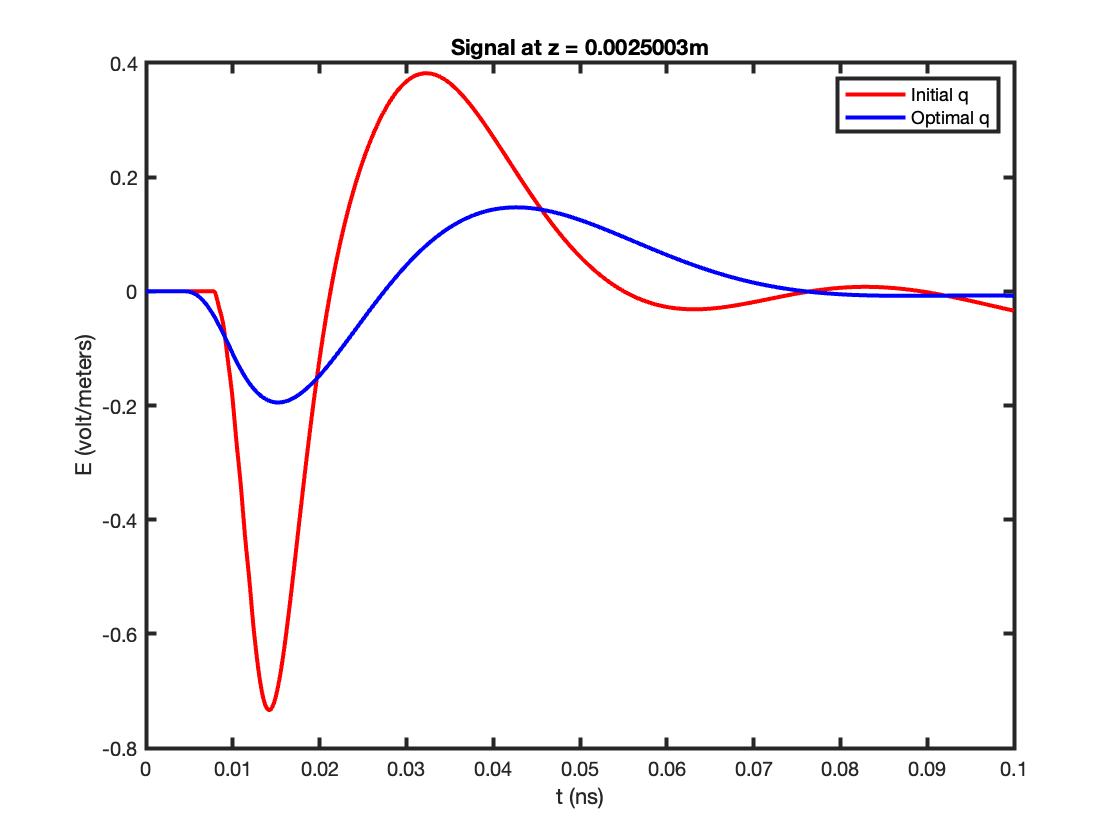}
    \caption{Optimization results for linear (left) and nonlinear (right) models}
    \label{fig: fig45}
\end{figure}

\subsubsection{Inclusion of Randomness and Nonlinearity ($\tau_r$ and $\beta$)}

\textbf{Frequency: 1 GHz. $\alpha = 10^{-11}$, $\tau_m = 8.1 \times 10^{-12}$ }

For this case, we test the significance of including both randomness and nonlinearity. The optimization results

        \begin{center}
        \begin{tabular}{ccc}
           \textbf{Model} & \textbf{Initial Cost}  & \textbf{Final Cost}\\
           \midrule 
           $\beta,\tau_r = 0$ &   $ 4.971905$ &  $  -2.162200$ \\
           $\beta,\tau_r \neq 0$ & $ 2.442979$ &  $-6.021237$ \\[2.0ex]
        \end{tabular}
        \end{center}
\noindent 
yield a relative error $e = 0.641$.
The optimal solutions in Fig.~\ref{fig: fig46} differ when $\tau_r$ and $\beta$ are both excluded from the optimization process. These results suggest that although $\tau_r$ may not be particularly significant on its own for low frequencies, it becomes important when combined with $\beta$.  
\begin{figure}[H]
    \centering
    \includegraphics[scale = 0.2]{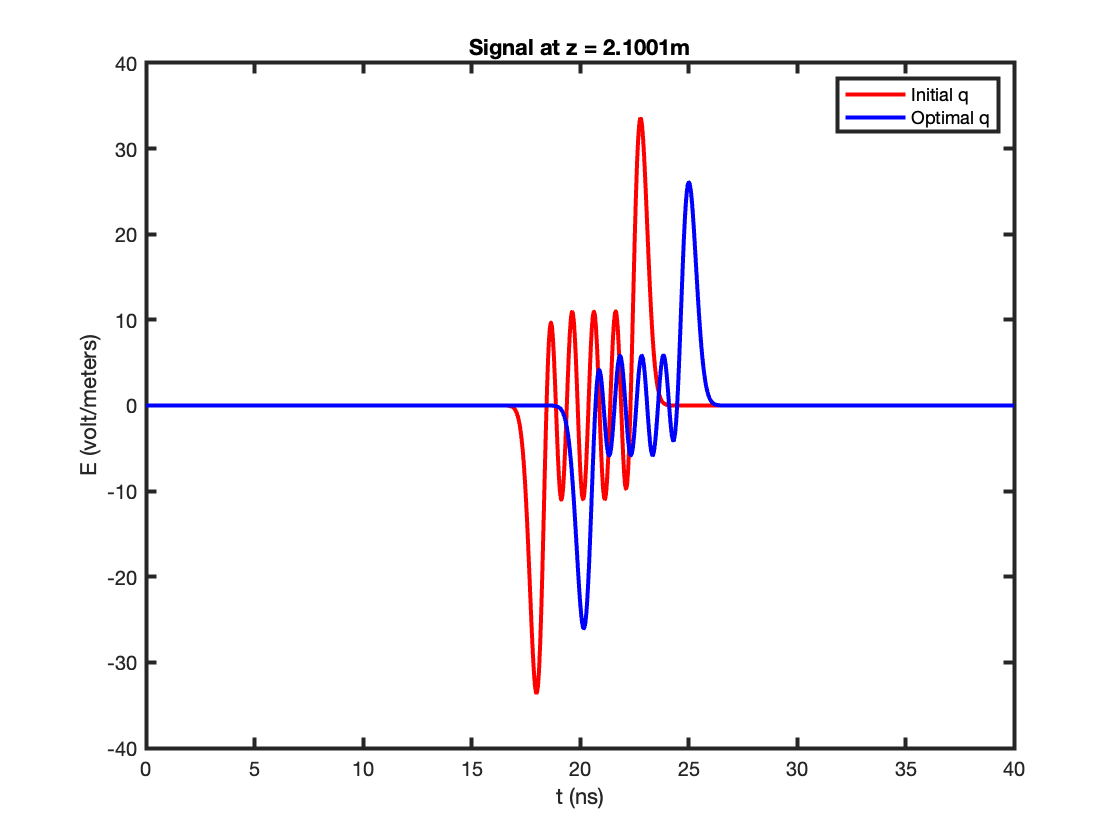}
     \includegraphics[scale = 0.2]{3ai.png}
    \caption{Optimization results for linear deterministic (left) and nonlinear random (right) models}
    \label{fig: fig46}
\end{figure}

\section{Conclusions}\label{sec:conclusions}
We have shown that we can improve the traditional nonlinearly forced Debye model in \cite{HTBanksGabriella} by replacing the average relaxation time $\tau$ with a distribution of relaxation times. The Polynomial Chaos method provides us with a convenient means of representing the stochastic polarization, $\mathcal{P}$ as a linear combination of orthogonal polynomials. By projecting into finite random space, we are able to replace a nonlinear random ordinary differential equation with a system of deterministic ODEs. Combining these with Maxwell’s and Faraday’s Law and noting that the electric field $E$ depends only on the macroscopic polarization $\mathbb{E}(P) \approx \alpha_0$, we obtain the polynomial chaos model \eqref{MPC} for an electromagnetic field propagating through a nonlinear dispersive dielectric media. The proposed discretization retains second-order accuracy for the coupled nonlinearly forced Maxwell-PC Debye system, consistent with both the truncation-error analysis and the convergence experiments.

Our results clearly show that the interplay between nonlinearity and randomness has a substantial impact on wave dynamics in Debye materials. These effects are not additive. Nonlinearity and randomness combine to create wave behaviors that are not seen with either alone.  We have shown that inverse problems can be solved for nonlinear random polarization models.  Regularization was essential to suppress unrealistic oscillations and enforce smooth, physically realistic waveforms. The significance test reveals that nonlinearity ($\beta$) is necessary for achieving optimal optimization results under strong regularization and high frequencies, while randomness ($\tau_r$) is generally not needed on its own. However, $\tau_r$ can become significant when combined with $\beta$. 

\section*{Acknowledgments and Author Statement}
This work was supported by NSF under DMS grant number 2012882. The authors report there are no competing interests to declare. The first author contributed the conception and design, revising it critically for intellectual content and the final approval of the version to be published; the second author contributed formal analysis and investigation, and the drafting of the paper. All authors agree to be accountable for all aspects of the work.

\bibliographystyle{plain}
\bibliography{ref}

\end{document}